\documentclass[12pt]{article}
\usepackage[margin=1in]{geometry}
\usepackage{amsmath,amssymb,amsthm,mathtools}
\usepackage{bm}
\usepackage{booktabs}
\usepackage{enumitem}
\usepackage[round,authoryear]{natbib}
\usepackage{graphicx}
\graphicspath{{./}}
\usepackage{float}
\usepackage[colorlinks=true,linkcolor=blue,citecolor=blue,urlcolor=blue]{hyperref}
\usepackage{setspace}
\newtheorem{theorem}{Theorem}[section]

\title{A likelihood-based coefficient for biomedical\\
       independence testing:\\
       the binomial-cut composite likelihood ratio}
\author{Jing Qin\\
Office of Biostatistics Research,\\
National Institute of Allergy and Infectious Diseases,\\
National Institutes of Health\\
\texttt{jingqin@niaid.nih.gov}}
\date{\today}

\providecommand{\indep}{\perp\!\!\!\perp}
\providecommand{\E}{\mathbb{E}}
\providecommand{\R}{\mathbb{R}}

\newenvironment{keywords}{\medskip\noindent\textbf{Key words:} }{\medskip}

\providecommand{\Q}{\mathbb{Q}}

\newtheorem{stheorem}{Theorem}

\newtheorem{slemma}[stheorem]{Lemma}
\newtheorem{sproposition}[stheorem]{Proposition}
\theoremstyle{definition}
\newtheorem{sremark}[stheorem]{Remark}

\begin{document}
\maketitle

\begin{abstract}
Standard dependence summaries used in biomedical studies---
Pearson's $r$, Spearman's $\rho$, Kendall's $\tau$---take
values in $[-1, 1]$ with $0$ indicating no linear or monotone
association. Zero does not distinguish independence from
non-monotone dependence, so the scale cannot represent
threshold effects, heteroscedasticity, and tail shifts common
in clinical practice. We formulate independence testing as a
\emph{composite Bernoulli likelihood ratio}: at each threshold
$t$, comparing the Bernoulli laws of $\mathbf{1}(Y \le t)$
conditionally on $X$ versus marginally, aggregated over cut
points. The resulting coefficient $\xi_{\mathrm{cut}}$ lies on
$[0, 1]$ with $\xi_{\mathrm{cut}} = 0$ iff $X$ and $Y$ are
independent (under continuity of $Y$), $\xi_{\mathrm{cut}} = 1$
iff $Y$ is a measurable function of $X$, and intermediate
values quantifying distributional dependence on the $[0, 1]$
scale. The Bernoulli likelihood weights each
cut point by the Bernoulli information at that threshold,
emphasizing tail regions where conditional and marginal
distributions differ most. Estimation uses a Nadaraya--Watson
plug-in with a maximum-over-grid bandwidth; inference is by
exact permutation. In simulations at $n = 200$,
$T_{\mathrm{cut}}$ substantially outperforms rank-based
coefficients on the W-shaped non-monotone and heteroscedastic
alternatives; on the linear alternative $T_{\mathrm{cut}}$ has
better power than Chatterjee's $\xi_{\mathrm{Ch}}$ but is
outperformed by the boosted variant. We illustrate on the
Seattle cohort ($n = 70$, ages $21$--$88$) of the aging plasma
proteome example dataset \citep{shen2020deswan}, screening all
$1{,}305$ proteins: under Benjamini--Hochberg control at
$q < 0.05$, $T_{\mathrm{cut}}$ rejects on $70$ proteins,
six of which are missed by all three classical methods at the
same FDR level.
\end{abstract}

\begin{keywords}
Bernoulli likelihood ratio;
Biomarker dependence;
Composite likelihood;
Independence testing;
Nonparametric inference;
Permutation test.
\end{keywords}

\section{Introduction}\label{sec:intro}

Testing whether two random variables are statistically
independent is a fundamental and ubiquitous problem across
scientific disciplines. In medical and epidemiological studies,
independence testing underlies risk-factor discovery, biomarker
validation, and the assessment of confounding in observational
data. In econometrics and finance, it appears in tests for
market efficiency, volatility spillovers, and predictor
selection. In genomics, it drives eQTL discovery and gene
network reconstruction. In machine learning, it forms the
backbone of feature-relevance screening and causal-inference
methods. Across all of these settings, the classical toolkit---
Pearson's $r$, Spearman's $\rho$, Kendall's $\tau$---targets
linear or monotone association and can vanish under substantive
but non-monotone forms of dependence such as threshold effects,
heteroscedasticity, and distributional shifts. The need for
independence tests that are consistent against arbitrary
alternatives, easily interpretable, and computationally
practical is therefore broadly felt.

\subsection{Motivating example: non-monotone protein changes in the aging plasma proteome}

In a landmark paper published in \emph{Nature Medicine},
\citet{lehallier2019undulating} measured $2{,}925$ plasma
proteins in $4{,}263$ individuals aged $18$ to $95$ years and
demonstrated that many proteins do not change monotonically
with age but instead exhibit ``waves''---crests of change at
approximately ages $34$, $60$ and $78$ years. These
non-monotone signatures reflect biological transitions such as
reproductive maturation, mid-life metabolic shifts and the
onset of senescence, and are missed by the standard linear or
monotone correlation-based screens. A screening tool that can
detect non-monotone dependence between a continuous biomarker
and age---or between two biomarkers whose relationship changes
across the covariate range---is therefore of direct biological
and clinical value. Such patterns are typical in biomedical
studies of aging, biomarker dynamics and dose--response, where
biologically meaningful signals often lie in the tails, in the
variance, or in the shape of the conditional distribution
rather than in the conditional mean.

\subsection{From parametric to nonparametric likelihood ratios}

In a parametric setting, independence testing between $Y$ and $X$
has a clean likelihood-ratio formulation. Assume a working model
$f(y \mid x, \beta)$ under which $\beta = 0$ corresponds to
$Y \perp\!\!\!\perp X$. The likelihood-ratio test
\begin{equation}
\label{eq:LRT-param}
\Lambda_{\text{par}}
\;=\; 2 \,\bigl[\ell(\hat\beta) - \ell(0)\bigr]
\end{equation}
is optimal against contiguous local alternatives by Wilks'
theorem, with a $\chi^2$ null limit. It inherits its power from
the parametric model: it is efficient when the model is right,
but may have zero power against alternatives that violate the
parametric assumption.

The natural nonparametric analog would take the same form---twice
a log-likelihood ratio---but replace the parametric model with an
assumption-free description of the joint distribution. The
obstacle is that a fully nonparametric likelihood ratio for
$Y \perp\!\!\!\perp X$ requires estimating the conditional
density $f(y \mid x)$: kernel and spline density estimates are
sensitive to bandwidth choice, suffer from boundary bias, and
require moment conditions to control tail behavior. Existing
nonparametric alternatives sidestep this instability by moving
outside the LR framework: Pearson and rank correlations summarise
a single conditional moment; distance correlation
\citep{szekely2007measuring} and the Hilbert--Schmidt Independence
Criterion \citep{gretton2005measuring} return single opaque numbers;
rank-based coefficients \citep{chatterjee2021new,azadkia2021simple}
target the full conditional distribution but weight thresholds
uniformly, and can have lower power against smooth
distributional departures than a Fisher-weighted alternative.

\subsection{The binomial-cut log-likelihood in the data}

We complete the analog of the parametric LR test by aggregating
one-dimensional Bernoulli likelihood ratios---one at each
threshold of $Y$---rather than attempting a full nonparametric
density-based LR. Suppose we observe $n$ independent pairs
$(X_1,Y_1),\dots,(X_n,Y_n)$. Fix a threshold $t$ and define the
binary indicator $B_i(t) = \mathbf{1}(Y_i \le t)$; each $B_i(t)$ is
Bernoulli. Under the null, its success probability is the marginal
$p_t := \Pr(Y\le t)$, estimated by $\bar B(t) = n^{-1}\sum_i B_i(t)$.
Under an alternative, the conditional success probability
$p_t(x) := \Pr(Y\le t \mid X = x)$ varies with $x$, and is
estimated by a Nadaraya--Watson smoother $\hat p_t(x)$ with
kernel $K$ and bandwidth $h$.

The Bernoulli log-likelihood ratio at threshold $t$ reduces to
$n^{-1}\sum_i \phi(\hat p_t(X_i), \bar B(t))$ where
\begin{equation}
\label{eq:phi-def-intro}
\phi(u,v) \;=\; u\log(u/v) + (1-u)\log((1-u)/(1-v))
\end{equation}
is the Bernoulli KL divergence. Averaging over cut points
$t = Y_j$ drawn from the empirical marginal gives the sample
coefficient
\begin{equation}
\label{eq:xi-hat-intro}
\widehat\xi_{\mathrm{cut}}
\;=\;
\frac{2}{n^2}
\sum_{j=1}^n \sum_{i=1}^n
\phi\bigl(\hat p_{Y_j}(X_i),\; \bar B(Y_j)\bigr).
\end{equation}
The factor 2 puts $\widehat\xi_{\mathrm{cut}}$ on the same scale
as the parametric statistic \eqref{eq:LRT-param}: both are twice
a log-likelihood-ratio object, and $\widehat\xi_{\mathrm{cut}}$
is a composite likelihood \citep{lindsay1988composite} across
cut points.

\subsection{The population coefficient}

Under regularity, \eqref{eq:xi-hat-intro} converges to the
population coefficient
\begin{equation}
\label{eq:xicut-intro}
\xi_{\mathrm{cut}}(X, Y)
\;=\; 2 \int E_X\!\bigl[\phi\bigl(F(t\mid X),\, F(t)\bigr)\bigr]\, dF(t).
\end{equation}
The population coefficient $\xi_{\mathrm{cut}}$ (no hat) is a
functional of the joint distribution of $(X, Y)$ and is the
object with the population interpretation established below.
The sample statistic $\widehat\xi_{\mathrm{cut}}$ in
\eqref{eq:xi-hat-intro} depends on a smoothing bandwidth $h$
that must be chosen; in Section~\ref{sec:estimation} we
maximise $\widehat\xi_{\mathrm{cut}}(h)$ over a small grid of
bandwidths and denote the resulting quantity $T_{\mathrm{cut}}$,
which serves as the test statistic in the permutation
calibration and should not be interpreted as an unbiased
estimator of $\xi_{\mathrm{cut}}$.

Assuming $Y$ has a continuous marginal distribution, $\xi_{\mathrm{cut}}
\in [0,1]$, vanishing iff $X \perp\!\!\!\perp Y$ and equal to one
iff $Y$ is a measurable function of $X$ almost surely. No moment
assumptions are needed. In contrast to the standard summaries
Pearson's $r$, Spearman's $\rho$, and Kendall's $\tau$---which lie
on $[-1, 1]$ with $0$ indicating no linear or monotone association
but not distinguishing independence from non-monotone dependence
---the population coefficient $\xi_{\mathrm{cut}}$ lies on the
$[0, 1]$ scale, with $0$ for independence and $1$ for full
functional dependence.

\subsection{Position relative to existing measures}

A second-order Taylor expansion of $\phi$ around the null recovers
the Fisher-weighted $L^2$ measure
\begin{equation}
\label{eq:xiad-intro}
\xi_{AD}
\;=\; \int E_X\!\bigl[(F(t\mid X) - F(t))^2\bigr]
      \,\bigl/\, \bigl(F(t)(1-F(t))\bigr)\, dF(t),
\end{equation}
a Fisher-weighted variant of the copula-based measure of
\citet{dette2013copula}. The uniform-weight measure
$q_{DSS} = 6 \int E_X[(F(t \mid X) - F(t))^2]\, dF(t)$ equals,
under continuity, Chatterjee's \citeyearpar{chatterjee2021new}
rank correlation at the population level, and
\citet{azadkia2026kernel} have derived a null limiting
distribution for its kernel plug-in. The Fisher-weighted
$\xi_{AD}$ is a different functional and does not coincide with
Chatterjee's coefficient.

The paper's contribution is a likelihood-ratio derivation of a
Fisher-weighted CDF-difference coefficient for independence
testing. Aggregating Bernoulli likelihood ratios across
thresholds of $Y$ yields $\xi_{\mathrm{cut}}$; a second-order
expansion of the composite log-likelihood around the null
recovers a Fisher-weighted quadratic form with weight
$\{F(t)(1-F(t))\}^{-1}$, so Fisher-type weighting appears as
the second-order component of the likelihood rather than as an
externally imposed feature. This construction places
nonparametric independence testing on a likelihood-ratio
footing while retaining a bounded $[0, 1]$ scale.

\subsection{Contributions}

\begin{enumerate}[label=(\roman*),leftmargin=1.5em]
  \item A composite Bernoulli likelihood-ratio formulation of
        independence testing, with the sample coefficient
        \eqref{eq:xi-hat-intro} converging to $\xi_{\mathrm{cut}}$
        in \eqref{eq:xicut-intro}.
  \item A demonstration that the Fisher-weighted Dette--Siburg--
        Stoimenov measure arises as the second-order Taylor
        expansion of the composite likelihood ratio around the
        null: Fisher weighting is a second-order consequence of
        the likelihood construction rather than an imposed
        design choice.
  \item Exact permutation inference for the plug-in estimator
        with a maximum-over-grid bandwidth selection rule; the
        permutation calibration absorbs any bandwidth-dependent
        bias in the point estimate.
  \item Illustration on the Seattle cohort ($n = 70$,
        ages $21$--$88$) of the aging plasma proteome example
        dataset of \citet{shen2020deswan}: applied to all
        $1{,}305$ proteins, the four coefficients (Pearson,
        Spearman, Chatterjee's $\xi_{\text{Ch}}$, and the
        proposed $T_{\mathrm{cut}}$) reject the null of
        independence on overlapping but non-nested sets of
        proteins; under Benjamini--Hochberg control at $q < 0.05$,
        $T_{\mathrm{cut}}$ rejects on $70$ proteins,
        six of which are missed by all three classical methods
        at the same FDR level.
\end{enumerate}

\section{The binomial-cut composite likelihood ratio}\label{sec:formulation}

\subsection{Sample construction}\label{sec:sample-construction}

Let $(X_1, Y_1), \dots, (X_n, Y_n)$ be independent copies of $(X,Y)$
with $X \in \mathbb{R}^d$ and $Y \in \mathbb{R}$. Throughout we
assume $Y$ has a continuous marginal distribution function $F$;
this ensures distinct sample values, the probability integral
transform $F(Y) \sim \mathrm{Uniform}[0,1]$, and the
$[0,1]$-normalization of the coefficient. Extensions to discrete
$Y$ are beyond the present scope.

\paragraph{Threshold-wise Bernoulli comparison.}

Fix a threshold $t \in \mathbb{R}$ and consider the binary indicator
\begin{equation}
\label{eq:Bi-def}
B_i(t) \;=\; \mathbf{1}(Y_i \le t), \qquad i = 1, \dots, n.
\end{equation}
Under the null hypothesis $H_0 : X \perp\!\!\!\perp Y$, the
probability $\Pr(B_i(t) = 1)$ equals the marginal proportion
$p_t := \Pr(Y \le t) = F(t)$ for every $i$, irrespective of $X_i$.
Under the alternative, the conditional probability
\begin{equation}
\label{eq:pt-def}
p_t(x) \;=\; \Pr(Y \le t \mid X = x) \;=\; F(t \mid x)
\end{equation}
varies in $x$. Testing independence at threshold $t$ is therefore a
two-sample Bernoulli problem: does the probability of the event
$\{Y \le t\}$ depend on $X$?

\paragraph{Estimation.}

The natural estimator of $p_t$ is the empirical distribution
function $\bar B(t) = n^{-1} \sum_i B_i(t) = \hat F(t)$. For
$p_t(x)$ we use a kernel-smoothed plug-in $\hat p_t(x)$; the
concrete estimator, together with the specific kernel,
leave-one-out modification, and bandwidth rule used throughout,
is defined in Section~\ref{sec:estimation}. The choice of kernel
plug-in is not consequential at the population level: any
consistent estimator of $p_t(x)$ yields the same population
functional $\xi_{\mathrm{cut}}$ established below.

\paragraph{Log-likelihood ratio at threshold $t$ and aggregation.}

The two-sample log-likelihood ratio at threshold $t$ is
\begin{equation}
\label{eq:LR-t}
\Lambda_n(t) \;=\; \sum_{i=1}^n \phi\bigl(\hat p_t(X_i), \bar B(t)\bigr),
\quad
\phi(u, v) \;=\; u \log(u/v) + (1 - u) \log((1-u)/(1-v)),
\end{equation}
where $\phi$ is the Bernoulli Kullback--Leibler divergence.
Averaging $\Lambda_n(t)$ over $t = Y_1, \dots, Y_n$ and
normalizing gives the empirical binomial-cut coefficient
\begin{equation}
\label{eq:xihat-cut}
\widehat\xi_{\mathrm{cut}} \;=\;
\frac{2}{n^2} \sum_{j=1}^n \sum_{i=1}^n
\phi\bigl(\hat p_{Y_j}(X_i), \, \bar B(Y_j)\bigr).
\end{equation}
The factor of two normalizes the population limit to $[0,1]$,
established in Section~\ref{sec:population}. The construction is
a \emph{composite likelihood} in the sense of
\citet{lindsay1988composite}: rather than the intractable joint
likelihood of $(Y, X)$, we aggregate a family of marginal
Bernoulli likelihoods indexed by threshold.

\subsection{The population coefficient and its properties}\label{sec:population}

The sample coefficient \eqref{eq:xihat-cut} converges under
regularity conditions to the population coefficient
\begin{equation}
\label{eq:xicut-pop}
\xi_{\mathrm{cut}}(X, Y)
\;=\; 2 \int E_X\!\bigl[\phi\bigl(F(t\mid X), F(t)\bigr)\bigr]\, dF(t).
\end{equation}
Here $F(t) = \Pr(Y \le t)$ is the marginal CDF of $Y$ and
$F(t \mid x) = \Pr(Y \le t \mid X = x)$ is the conditional CDF.
The expectation is over $X$; the integral is with respect to the
marginal distribution of $Y$. The factor of two in
\eqref{eq:xicut-pop} arises from the identity
$\int_0^1 H(u)\, du = 1/2$, where
$H(u) = -u \log u - (1-u) \log(1-u)$ is the binary entropy; this
identity holds because $F(Y) \sim \text{Uniform}[0,1]$ under
continuity of $Y$.

The basic population properties of $\xi_{\mathrm{cut}}$ are
collected in the following theorem.

\begin{theorem}[Population properties]\label{thm:properties}
Assume $Y$ has a continuous marginal distribution function $F$
with interval support on which $F$ is strictly increasing. Then:
\begin{enumerate}[label=\textup{(\roman*)},leftmargin=1.6em]
\item \textup{(Range)} $\xi_{\mathrm{cut}}(X, Y) \in [0, 1]$.
\item \textup{(Characterization of independence)}
      $\xi_{\mathrm{cut}}(X, Y) = 0$ if and only if $X$ and $Y$
      are independent.
\item \textup{(Characterization of functional dependence)}
      $\xi_{\mathrm{cut}}(X, Y) = 1$ if and only if $Y$ is a
      measurable function of $X$ almost surely.
\item \textup{(Invariance)} $\xi_{\mathrm{cut}}$ is invariant
      under strictly monotone transformations of $X$ and of $Y$
      separately.
\end{enumerate}
\end{theorem}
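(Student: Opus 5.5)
The plan is to rewrite $\xi_{\mathrm{cut}}$ as an entropy deficit and read all four parts off that form. Write $\phi(u,v) = -H(u) - u\log v - (1-u)\log(1-v)$, with the convention $0\log 0 = 0$. Fix $t$ and take $u = F(t\mid X)$. Because $E_X[F(t\mid X)] = F(t)$ and the last two terms are linear in $u$, we get
\[
E_X\bigl[\phi(F(t\mid X),F(t))\bigr] \;=\; H(F(t)) - E_X\bigl[H(F(t\mid X))\bigr].
\]
Since $F$ is continuous, $F(Y)\sim\mathrm{Uniform}[0,1]$, so $\int H(F(t))\,dF(t) = \int_0^1 H(u)\,du = 1/2$. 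This gives
\[
\xi_{\mathrm{cut}} \;=\; 1 - 2\int E_X\bigl[H(F(t\mid X))\bigr]\,dF(t).
\]
Throughout, $F(\cdot\mid x)$ is taken from a regular conditional distribution of $Y$ given $X$, so it is jointly measurable in $(t,x)$ and Fubini applies.

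\textbf{Part (i).} The lower bound follows from $\phi\ge 0$, which is Gibbs' inequality for Bernoulli laws. The upper bound follows from $H\ge 0$ in the entropy form above.

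\textbf{Part (ii).} If $X\indep Y$, then $F(t\mid X)=F(t)$ and $\xi_{\mathrm{cut}}=0$. Conversely, $\xi_{\mathrm{cut}}=0$ forces $\phi(F(t\mid X),F(t))=0$ for $dF\otimes P_X$-a.e. $(t,x)$. Since $\phi(u,v)=0$ only when $u=v$, by Fubini there is a $P_X$-full set of $x$ on which $F(t\mid x)=F(t)$ for $dF$-a.e. $t$. The support of $dF$ is the whole interval because $F$ is strictly increasing there, so the equality holds on a dense set of $t$ in the interval. Right-continuity of both functions extends it to every $t$ in the interval. Outside the interval $F(t)\in\{0,1\}$, and $E_X[F(t\mid X)]=F(t)$ then forces $F(t\mid X)=F(t)$ almost surely. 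Hence the conditional law equals the marginal, which is independence.

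\textbf{Part (iii).} If $Y=g(X)$ almost surely, then $F(t\mid X)=\mathbf 1(g(X)\le t)$, so $H\equiv 0$ and $\xi_{\mathrm{cut}}=1$. Conversely, $\xi_{\mathrm{cut}}=1$ gives $F(t\mid x)\in\{0,1\}$ for $dF$-a.e. $t$ and $P_X$-a.e. $x$. The same density and right-continuity argument as in (ii) shows $F(\cdot\mid x)$ is a $\{0,1\}$-valued CDF, i.e. a point mass at $g(x):=\inf\{t: F(t\mid x)=1\}$. Measurability of $g$ follows from $\{g\le s\}=\{x: F(s\mid x)=1\}$. Finally, $P(Y=g(X)) = E\bigl[P(Y=g(X)\mid X)\bigr]=1$. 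I expect this direction, together with the null-set bookkeeping shared with (ii), to be the main technical obstacle. The work is in passing from "almost every $t$" to "every $t$", which uses the interval-support assumption and right-continuity, and in choosing the regular conditional distribution carefully.

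\textbf{Part (iv).} For $X$, a strictly monotone map $\psi$ generates the same $\sigma$-field as $X$, so the conditional laws of $Y$ given $X$ and given $\psi(X)$ coincide and $E_X[\cdot]$ is unchanged. For $Y$ with $\psi$ increasing, the substitution $s=\psi(t)$ maps $F_{\psi(Y)}(s)=F(t)$ and $F_{\psi(Y)}(s\mid x)=F(t\mid x)$, and the integral against $dF$ is preserved. For $\psi$ decreasing, the new CDFs at $s=\psi(t)$ are $1-F(t^-)$ and $1-F(t^-\mid x)$. Continuity of $F$ gives $F(t^-)=F(t)$. The set of $(t,x)$ where $F(\cdot\mid x)$ jumps at $t$ is $dF\otimes P_X$-null: its $dF\otimes P_X$-measure equals $P(Y=Y')$ with $Y'$ an independent copy of $Y$, which is zero. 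The symmetry $\phi(1-u,1-v)=\phi(u,v)$ then completes the argument.
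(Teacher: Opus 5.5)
Your proof is correct and follows essentially the same route as the paper. Your entropy-deficit identity is the paper's envelope bound $E_X[\phi(U_t,v)]\le M(v)=H(v)$ written as an equality, since the chord gap $(1-u)\phi(0,v)+u\phi(1,v)-\phi(u,v)$ is exactly $H(u)$; the Fubini/density/right-continuity extension, the boundary handling via the tower property (which is the content of the paper's support-containment lemma), and the $\sigma$-field and left-limit arguments in (iv) all match, up to two small fixes. In (iv), what equals $P(Y=Y')=0$ is the integral of the jump size $\nu_x(\{t\})$ over $dF\otimes P_X$, not the measure of the jump set, and that vanishing integral is what makes the jump set null. In (ii)--(iii), the right endpoint $b$ is covered by your $F(t)\in\{0,1\}$ tower argument, not by right-continuity within the interval.
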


The proof is deferred to the Supplementary Material. For (i), fix
$t$ and let $U_t = F(t \mid X)$, $v = F(t)$. By the tower
property $E_X[U_t] = v$, and with $H(u) = -u \log u - (1-u)
\log(1-u)$ concave, Jensen's inequality gives $E_X[\phi(U_t, v)]
= H(v) - E_X[H(U_t)] \ge 0$ and $\le H(v)$. Since $H(F(Y)) \sim
\int_0^1 H(u)\, du = 1/2$ under continuity, integration yields
$0 \le \xi_{\mathrm{cut}} \le 1$; equality at $1$ holds iff
$F(t \mid X) \in \{0, 1\}$ for $F$-almost all $t$, i.e.\ the
conditional law of $Y$ given $X$ is a point mass.

Statement (ii) follows from strict convexity of the KL
divergence: at fixed $t$, $\phi(u, v) = D_{\mathrm{KL}}(u
\Vert v)$ where $u, v$ are Bernoulli parameters, and Gibbs's
inequality gives $\phi(u, v) \ge 0$ with equality iff $u = v$.
For the reverse direction: if $\xi_{\mathrm{cut}} = 0$, then
the non-negative integrand $E_X[\phi(F(t \mid X), F(t))]$
vanishes for $F$-almost all $t$, so $F(t \mid X) = F(t)$
almost surely for $F$-almost all $t$. Since the collection
$\{F(t) : t \in \mathbb R\}$ is dense in $[0, 1]$ under
continuity of $F$, right-continuity of both $F$ and
$F(\cdot \mid X)$ extends the equality to all $t$; hence
the conditional and marginal laws of $Y$ coincide almost
surely, which is $Y \perp X$. The converse (independence
implies $\xi_{\mathrm{cut}} = 0$) is immediate.

Statement (iii)
uses that $E_X[\phi(U_t, v)] = H(v)$ iff $U_t \in \{0, 1\}$
almost surely, characterizing functional dependence. Statement
(iv) follows from the symmetry $\phi(u, v) = \phi(1-u, 1-v)$ and
from covariance of $F(\cdot)$ and $F(\cdot \mid x)$ under
monotone transformations of $Y$; the $X$-invariance is immediate
since $\sigma(h(X)) = \sigma(X)$ for any bimeasurable
transformation $h$ of $X$ (when $X$ is scalar this includes
strictly monotone transformations; when $X \in \mathbb R^d$ it
includes any bijective bimeasurable transformation).

\paragraph{Population-only invariance.} The invariance stated
in Theorem~\ref{thm:properties}(iv) is a property of the
population functional $\xi_{\mathrm{cut}}$. It does not carry
over to the sample statistic $T_{\mathrm{cut}}$: a kernel
smoother with a scale-dependent Silverman-type bandwidth is
not invariant under monotone rescalings of $X$, so
$T_{\mathrm{cut}}$ computed on $\{X_i\}$ can differ from
$T_{\mathrm{cut}}$ computed on $\{h(X_i)\}$ even when both
carry the same $\sigma$-algebra. In the aging application $X$
is age in years and any practical rescaling (months, decades)
would have negligible numerical effect at $n = 70$; more
generally this behaviour is inherited from the kernel estimator
and is not specific to $T_{\mathrm{cut}}$.

\paragraph{Directionality.} The coefficient
$\xi_{\mathrm{cut}}$ is directional in its construction: $X$
indexes the conditional distribution of $Y$ and the thresholds
in \eqref{eq:xicut-pop} are taken on $Y$. Consequently
$\xi_{\mathrm{cut}}(X, Y)$ need not equal
$\xi_{\mathrm{cut}}(Y, X)$ in general---the coefficient is not
a symmetric correlation. This asymmetry is intentional and
matches the biomedical use case where $X$ is regarded as a
covariate (e.g.\ age, dose, exposure) and $Y$ as a biomarker
outcome whose conditional distribution given $X$ is the object
of scientific interest.

\subsection{Regularization}\label{sec:regularization}

For finite samples $\phi(u, v)$ can grow unbounded at extreme
thresholds where $\hat p_t(X_i)$ approaches $0$ or $1$. A
regularized version $\phi_\varepsilon(u, v) = (u + \varepsilon)
\log((u + \varepsilon)/(v + \varepsilon)) + (1 - u + \varepsilon)
\log((1 - u + \varepsilon)/(1 - v + \varepsilon))$
with $\varepsilon \ge 0$ is available as a safeguard; the
Supplementary Material develops its properties. We recommend
$\varepsilon = 0$ in practice, which is stable under the
bandwidth rule of Section~\ref{sec:estimation} and is used in the
simulations and aging plasma proteome application below.
Sensitivity of the reported power values to $\varepsilon$ is
documented in Web Appendix~F: on scenarios (S1)--(S5) results
with $\varepsilon = 0.05$ and $\varepsilon = 0$ agree to within
Monte Carlo error, while on the heteroscedastic scenario (S6)
the $\varepsilon = 0.05$ variant gives a modestly lower power
than $\varepsilon = 0$ ($0.358$ vs $0.420$); we recommend
$\varepsilon = 0$ as the default.

\subsection{Interpretation of the coefficient}\label{sec:interpretation}

The population coefficient $\xi_{\mathrm{cut}}$ lies on $[0, 1]$
and admits a direct interpretation on the same scale as the
coefficient of determination $R^2$:

\begin{itemize}[leftmargin=1.5em]
  \item $\xi_{\mathrm{cut}} = 0$ if and only if $X$ and $Y$ are
        independent (under continuity of $Y$): knowing $X$
        provides no information about the distribution of $Y$.
  \item $\xi_{\mathrm{cut}} = 1$ if and only if $Y$ is a
        (measurable) function of $X$: knowing $X$ determines $Y$
        exactly.
  \item Intermediate values quantify the strength of the
        distributional dependence: a value near zero indicates
        weak dependence, a value near one indicates dependence
        approaching a functional relationship.
\end{itemize}

Unlike Pearson's $r$ or Spearman's $\rho$, which measure only
linear or monotone association and can be zero for genuinely
dependent variables, $\xi_{\mathrm{cut}} = 0$ characterises full
statistical independence. Unlike Chatterjee's rank correlation,
which uses uniform weight across thresholds, the Bernoulli
likelihood assigns a Fisher weight $1/[F(t)(1-F(t))]$ that
emphasises differences in the tails of the conditional
distribution, where signal is often concentrated in clinical and
biomarker settings. An information-theoretic interpretation
relating $\xi_{\mathrm{cut}}$ to threshold-averaged mutual
information is developed in Web Appendix~D.

\section{Connection to existing measures}\label{sec:connections}

Independence testing based on comparisons of the conditional CDF
$F(t \mid X)$ with the marginal $F(t)$ has a substantial recent
literature. In this section we place $\xi_{\mathrm{cut}}$ within
that literature and show that the Fisher weighting familiar in
Anderson--Darling-type statistics arises here as a consequence of
the Bernoulli likelihood construction.

\subsection{Second-order expansion of the Bernoulli KL}\label{sec:taylor}

Taylor expansion of the Bernoulli KL divergence $\phi(u, v)$ about
$u = v$ gives
\begin{equation}
\label{eq:phi-taylor}
\phi(u, v) \;=\; \frac{(u - v)^2}{2\, v(1 - v)} + R(u, v),
\end{equation}
where $R(u, v)$ is a third-order remainder whose sign depends on
$(u, v)$. Substituting the leading term into~\eqref{eq:xicut-pop}
yields the second-order approximation
\begin{equation}
\label{eq:xiad-def}
\xi_{AD}(X, Y)
\;=\;
\int E_X\!\left[
\frac{(F(t \mid X) - F(t))^2}{F(t)(1 - F(t))}
\right] dF(t).
\end{equation}
The weight $1/\bigl(F(t)(1 - F(t))\bigr)$ is the inverse of the
Bernoulli variance at $t$ and coincides with the local Fisher
information for a one-parameter Bernoulli family at $F(t)$; this is
the Anderson--Darling weighting
\citep{andersondarling1952asymptotic}. The two coefficients
$\xi_{\mathrm{cut}}$ and $\xi_{AD}$ agree at second order in a
neighborhood of independence. The remainder $R$ can be positive or
negative depending on $(u, v)$, so no uniform ordering between
$\xi_{\mathrm{cut}}$ and $\xi_{AD}$ holds; explicit bounds on
$|\xi_{\mathrm{cut}} - \xi_{AD}|$ appear in the Supplementary
Material.

\subsection{Relation to the Dette--Siburg--Stoimenov measure and Chatterjee's coefficient}\label{sec:DSS}

\citet{dette2013copula} introduced a copula-based measure of
regression dependence which, under continuity of $Y$, takes the
form
\begin{equation}
\label{eq:qDSS}
q_{DSS}(X, Y)
\;=\;
6 \int E_X\!\bigl[(F(t \mid X) - F(t))^2\bigr]\, dF(t).
\end{equation}
Comparing~\eqref{eq:xiad-def} with~\eqref{eq:qDSS}, $\xi_{AD}$ is
precisely $q_{DSS}$ with the uniform weight replaced by the Fisher
weight $1/\bigl(F(t)(1-F(t))\bigr)$: the two are structurally
identical apart from the weighting.

The Fisher weighting is not arbitrary. At each threshold $t$,
the Bernoulli information $1/\bigl(F(t)(1-F(t))\bigr)$ is the
inverse variance of a Bernoulli$(F(t))$ trial, and it emerges
automatically from the Bernoulli likelihood ratio $\phi$ at
second order (see Web Appendix~B) without being imposed on the
construction. Whether this weighting delivers higher power than
uniform weighting at a given alternative depends on where in
the range of $Y$ the conditional and marginal CDFs differ; the
simulations of Section~\ref{sec:simulations} illustrate
scenarios in which Fisher-weighted and uniform-weighted
coefficients trade advantages.

Under continuity of $Y$, Chatterjee's rank correlation
\citep{chatterjee2021new} converges in probability to $q_{DSS}$
(under our normalization; see \citet{shi2022power}).
\citet{azadkia2026kernel} derived the null limiting distribution
of the kernel plug-in estimator of $q_{DSS}$ at a non-standard
$\sqrt{n/h}$ rate.

\subsection{Positioning}\label{sec:position}

The coefficients $\xi_{\mathrm{cut}}$, $\xi_{AD}$, and $q_{DSS}$
(equivalently $\xi_{\text{Ch}}$ under continuity) all vanish iff
$X \perp Y$ and all equal one under functional dependence, so
each is an independence-characterizing measure. They differ in
what they emphasize: $q_{DSS}$ and $\xi_{\text{Ch}}$ apply
uniform weight across thresholds and, empirically, tend to have
their highest power against monotone dependence, while
$\xi_{AD}$ and $\xi_{\mathrm{cut}}$ apply Fisher weight and can
be more powerful against alternatives that concentrate the
difference between conditional and marginal distributions in
the tails or in a narrow region of $Y$. Comparative power
against any specific alternative depends on the shape of the
dependence and cannot be read off the population definition
alone.

A closely related recent proposal is \citet{wang2017generalized}'s
$G^2$, a piecewise-linear generalized $R^2$ motivated by Pearson's
blindness to nonlinearity and heteroscedasticity. Because $G^2$ is
a functional of $\mathrm{Var}(Y \mid X)$, it shares Pearson's
blindness to pure heteroscedasticity, whereas $\xi_{\mathrm{cut}}$
remains positive as a functional of the full conditional CDF.

Under common semiparametric models with structural form for
$F(y \mid x)$ but nonparametric baseline, small-signal expansion
of $\xi_{\mathrm{cut}}$ recovers familiar quantities: a
nonparametric analog of $R^2$ under location-shift; a
Fisher-weight sensitivity gain over Chatterjee's coefficient
under location-scale; and, under Cox proportional hazards, the
universal (baseline-free) expansion $\xi_{\mathrm{cut}} =
2\{\zeta(3) - 1\}\beta^2 \mathrm{Var}(X) + o(\beta^2)$
(Web Appendix~E).

The paper's contribution is a likelihood-ratio derivation of the
Fisher-weighted CDF-difference coefficient: aggregating Bernoulli
likelihood ratios across thresholds of $Y$ yields
$\xi_{\mathrm{cut}}$ with the Fisher weighting emerging from the
Bernoulli information at each cut point rather than being
imposed. We rely on exact permutation inference
(Section~\ref{sec:estimation}) rather than the asymptotic-null
approach of \citet{azadkia2026kernel}, sidestepping the
non-standard rate at a computational cost of $O(Bn^2)$ per
replication.

\section{Estimation and inference}\label{sec:estimation}

The composite Bernoulli likelihood-ratio formulation
of~Section~\ref{sec:formulation} unifies three tasks that are
typically handled by separate procedures in nonparametric
dependence testing: point estimation of the coefficient, selection
of the smoothing bandwidth, and calibration of the null
distribution. In this section we present each of the three, in a
form that emphasizes their common likelihood structure.

\subsection{Plug-in estimation}\label{sec:plugin}

We now specify the concrete plug-in $\hat p_t(x)$ used
throughout the paper, promised schematically in
Section~\ref{sec:sample-construction}. Given a bandwidth
$h > 0$, we use the \emph{leave-one-out} Nadaraya--Watson
estimator
\begin{equation}
\label{eq:phat-loo}
\hat p_t^{[-i]}(X_i; h)
\;=\;
\frac{\sum_{k \ne i} K_h(X_i - X_k)\, \mathbf{1}(Y_k \le t)}
     {\sum_{k \ne i} K_h(X_i - X_k)}
\end{equation}
which excludes each observation's own kernel weight. Here $K$ is
a symmetric probability density and $K_h(u) = h^{-1} K(u/h)$; in
all simulations and in the aging-proteome application we use the
standard Gaussian kernel $K(u) = \phi(u)$. Substituting
\eqref{eq:phat-loo} and $\hat F$ into
\eqref{eq:xihat-cut} yields the bandwidth-indexed sample
statistic
\begin{equation}
\label{eq:xihat-h}
\widehat\xi_{\mathrm{cut}}(h)
\;=\;
\frac{2}{n^2}
\sum_{j=1}^n \sum_{i=1}^n
\phi\bigl(\hat p_{Y_j}^{[-i]}(X_i; h),\, \hat F(Y_j)\bigr).
\end{equation}
Each evaluation of $\widehat\xi_{\mathrm{cut}}(h)$ costs
$O(n^2)$ operations.
The leave-one-out construction is necessary for the finite-sample
statistic to be well-defined: an ordinary Nadaraya--Watson
estimator that retains the self-weight forces $\hat p_t(X_i; h)
\to \mathbf{1}(Y_i \le t)$ as $h \to 0$, so
$\widehat\xi_{\mathrm{cut}}(h) \to 1$ regardless of the
underlying dependence. Leaving out the self-weight does not by
itself resolve this: as $h \to 0$, the LOO fit tends to the
indicator at the nearest neighbour of $X_i$, and
$\widehat\xi_{\mathrm{cut}}(h)$ still tends to $1$. This means
that maximization of $\widehat\xi_{\mathrm{cut}}(h)$ over a
bandwidth grid selects the smallest grid point in essentially
every replication under any distribution, including under the
null (documented empirically in Section~\ref{sec:bandwidth}).
The permutation calibration in Section~\ref{sec:permutation}
absorbs this bias---the same rule applies to observed and
permuted samples---so the resulting test remains valid, though
the point estimate should not be read on the population
$[0, 1]$ scale.

\subsection{Bandwidth selection}\label{sec:bandwidth}

The bandwidth $h$ enters the plug-in as a smoothing parameter.
We evaluate $\widehat\xi_{\mathrm{cut}}(h)$ on a small
multiplicative grid of bandwidths around a Silverman-type base
and take the maximum over the grid as our test statistic. The
grid is deterministic and depends only on $X$, so it applies
identically to the observed sample and to every permutation of
$Y$; the permutation calibration in
Section~\ref{sec:permutation} therefore yields an exact test at
any nominal level regardless of the selection rule. In this
sense the max-over-grid is a construction of a test statistic
rather than a data-driven bandwidth choice, and the choice
requires no separate optimality justification.

\paragraph{Multiplicative grid.}

Let
\begin{equation}
\label{eq:h0-def}
h_0
\;=\;
0.9 \, \min\!\Bigl\{\hat\sigma_X, \, \hat{\text{IQR}}_X / 1.34\Bigr\} \,
n^{-1/(4 + d)},
\end{equation}
where $\hat\sigma_X$ and $\hat{\text{IQR}}_X$ are the sample
standard deviation and interquartile range of the marginal
distribution of $X$, and $d$ is the dimension of $X$. The base
bandwidth $h_0$ is the standard Silverman bandwidth
\citep{silverman1986density} for density estimation. We then
evaluate $\widehat\xi_{\mathrm{cut}}(h)$ at
\begin{equation}
\label{eq:grid-def}
h \in \mathcal{H} \;=\; \{\rho\, h_0 : \rho \in \mathcal{R}\},
\qquad
\mathcal{R} = \{0.25, 0.40, 0.60, 0.80, 1.00\},
\end{equation}
and take
\begin{equation}
\label{eq:hstar-def}
\widehat h^* \;=\; \arg\max_{h \in \mathcal{H}}
\widehat\xi_{\mathrm{cut}}(h),
\qquad
T_{\mathrm{cut}} \;=\;
\widehat\xi_{\mathrm{cut}}(\widehat h^*).
\end{equation}
We use the symbol $T_{\mathrm{cut}}$ rather than a hat notation
to underscore that this quantity is used solely as a
permutation test statistic and is not intended as an estimator
of the population coefficient $\xi_{\mathrm{cut}}$. The grid
$\mathcal R$ is chosen as a coarse geometric ladder around
$h_0$ spanning roughly a factor of four; a wider grid did not
change conclusions in preliminary experiments, and a finer grid
increases the permutation-computation cost without improving the
finite-sample power in the simulations reported in Section~5.
Since the level of the test is controlled by the permutation
calibration regardless of the choice of $\mathcal R$, the
choice of grid does not affect the size of the test, and its
role is purely to allow the statistic to adapt to different
signal scales.

\paragraph{On the population interpretation of $T_{\mathrm{cut}}$.}

The finite-sample statistic $T_{\mathrm{cut}}$ is
positively biased as an estimator of the population coefficient
$\xi_{\mathrm{cut}}$: it inherits the boundary behavior of the
LOO Nadaraya--Watson fit at small bandwidths (as $h \to 0$ the
LOO fit tends to a nearest-neighbor indicator and
$\widehat\xi_{\mathrm{cut}}(h)$ tends to $1$), and empirically
the smallest grid point $0.25\,h_0$ is the maximizer in essentially
every replication of Section~\ref{sec:simulations} and every
protein of Section~\ref{sec:aging}. We therefore emphasize
throughout the paper the distinction between:
\begin{itemize}[leftmargin=1.5em]
\item the population coefficient $\xi_{\mathrm{cut}}$ defined
      by \eqref{eq:xicut-pop}, which has the interpretation on
      the $[0, 1]$ scale established in Theorem~\ref{thm:properties}, and
\item the sample statistic $T_{\mathrm{cut}}$
      defined by \eqref{eq:hstar-def}, which is a test statistic
      whose calibration is by permutation and whose numerical
      value should not be read directly on the population scale.
\end{itemize}
Consistent estimation of $\xi_{\mathrm{cut}}$ requires bias
correction (for example, a fixed bandwidth chosen large enough
that the LOO fit is stable, or a cross-fitted analogue) which
we do not develop here.

\subsection{Permutation calibration}\label{sec:permutation}

Under the null hypothesis of independence, the joint distribution
of the sample is invariant under permutations of $Y_1, \dots, Y_n$
that hold $X_1, \dots, X_n$ fixed. For any permutation $\pi$ of
$\{1, \dots, n\}$, the permuted sample
$(X_1, Y_{\pi(1)}), \dots, (X_n, Y_{\pi(n)})$ has the same joint
distribution under $H_0$ as the original sample. Consequently, the
statistic $T_{\mathrm{cut}}$ computed on any
permutation has, under $H_0$, the same distribution as the
statistic on the original sample.

\paragraph{Permutation-equivariance of the bandwidth rule.}

A key property of the maximization rule~\eqref{eq:hstar-def} is
that it is \emph{permutation-equivariant} in $Y$: the base
bandwidth $h_0$ in~\eqref{eq:h0-def} depends only on $X$, so it
is invariant under permutations of $Y$; the grid $\mathcal{H}$
is a deterministic function of $h_0$; and the maximization at each
grid point is over a statistic that is itself equivariant. Hence
$T_{\mathrm{cut}}$ has a valid permutation distribution
under $H_0$. Given the observed value $T_{\mathrm{cut}}^{\mathrm{obs}}$
and the values $T_{\mathrm{cut}}^{(1)}, \dots, T_{\mathrm{cut}}^{(B)}$
computed on $B$ independent random permutations of $Y$, the
Monte Carlo permutation $p$-value
\begin{equation}
\label{eq:pval-perm}
\hat p_{\mathrm{perm}}
\;=\;
\frac{1 + \#\{b : T_{\mathrm{cut}}^{(b)} \ge T_{\mathrm{cut}}^{\mathrm{obs}}\}}
     {B + 1}
\end{equation}
satisfies $\Pr(\hat p_{\mathrm{perm}} \le \alpha) \le \alpha$
under $H_0$, for every $\alpha \in (0, 1)$, in finite samples;
the standard $(B+1)$ correction yields a conservative Monte
Carlo procedure whose level control is exact in the limit
$B \to \infty$ and, more generally, coincides with the exact
finite-sample level of the complete permutation test when all
$n!$ permutations are enumerated.

Because the bandwidth grid $\mathcal{H}$ depends only on $X$,
the same candidate bandwidths are used for every permutation of
$Y$, and permuting $Y$ moves the statistic-at-each-$h$ jointly
with the maximizer. More generally, a bandwidth rule that
depends on $Y$ can also be accommodated by permutation
inference, provided the entire bandwidth-selection procedure is
recomputed identically within each permutation: the resulting
composite statistic $T_n(X, Y) = \widehat\xi_{\mathrm{cut}}
\{\widehat h(X, Y)\}$ is invariant under the permutation action
in the sense of Proposition~S4 of the Supplementary Material,
and the finite-sample permutation guarantee applies. The
$X$-only grid used here simply saves the cost of recomputing
the selection rule on each permutation replicate.

\subsection{Asymptotic properties}\label{sec:asymptotics}

Detailed asymptotic theory for kernel estimators of the class of
weighted-$L^2$ CDF-difference measures has been developed by
\citet{dette2013copula} under fixed alternatives and by
\citet{azadkia2026kernel} for the null distribution at the
non-standard rate $\sqrt{n/h}$. Analogous results for
$\widehat\xi_{\mathrm{cut}}(h)$ would require an argument of the
same style---population equivalence with $\xi_{AD}$ at second
order does not by itself transfer to the estimator, and the
technical bound in Web Appendix~B requires tail or
uniform-integrability conditions; we do not pursue this
because the permutation calibration~\eqref{eq:pval-perm} is
exact in finite samples and requires no asymptotic control.

\subsection{Conditional variant}\label{sec:conditional}

A conditional variant $\xi_{\mathrm{cut}}^{\mid Z}(X, Y) = 2 \int
E_{X, Z}[\phi(F(t \mid X, Z), F(t \mid Z))] dF(t \mid Z)$
addresses residual dependence after covariate adjustment.
Inference for the conditional null $H_0: X \indep Y \mid Z$
requires a conditional-randomization scheme rather than ordinary
permutation \citep{candes2018panning}; we mention this variant
here as a natural extension and defer full development to future
work.

\section{Simulation study}\label{sec:simulations}

This section reports simulation results assessing the size and
power of $T_{\mathrm{cut}}$ against a range of
alternatives representative of biomarker studies. The scenarios
are grouped by the dependence pattern they capture; the
comparisons are with recent independence-characterising
coefficients that share $\xi_{\mathrm{cut}}$'s ability to detect
non-monotone dependence.

\subsection{Design}\label{sec:sim-design}

For each scenario we generate $(X_i, Y_i)_{i=1}^n$ with
$X_i \sim \mathrm{Uniform}(-1, 1)$ and $Y_i$ constructed from
$X_i$ and an independent noise term $\varepsilon_i \sim N(0, 1)$
according to the scenario definitions below. A shape parameter
$\lambda \in \{0.25, 0.50, 0.75\}$ controls the noise level;
$\lambda = 0.25$ corresponds to a strong signal-to-noise ratio
and $\lambda = 0.75$ to a weak one. Sample sizes are $n = 100$,
$200$, and $500$; the main table reports $n = 200$ with
$\lambda = 0.5$, which represents a moderate-difficulty setting.
Complete tables across $n$ and $\lambda$ appear in
Web Appendix~F. Each entry in
Table~\ref{tab:power} is based on $M = 1000$ Monte Carlo
replications with $B = 999$ permutations per replication. The
nominal level is $\alpha = 0.05$.

The six scenarios use $\varepsilon_i \sim N(0, 1)$ independent of
$X_i$:
\begin{enumerate}[label=(S\arabic*),leftmargin=2em]
\item \textbf{Null.} $Y_i = \varepsilon_i$, independent of $X_i$.
      Assesses empirical size at $\alpha = 0.05$.
\item \textbf{Linear (monotone baseline).}
      $Y_i = 0.5\,X_i + 3\lambda\,\varepsilon_i$. Classical
      monotone case; the small slope-to-noise ratio makes this a
      genuinely challenging alternative rather than a saturated
      one.
\item \textbf{Threshold (step).}
      $Y_i = \mathrm{sign}(X_i) + 3\lambda\,\varepsilon_i$.
      The mean function has a discontinuity at a fixed clinical
      cutpoint; mimics biomarker responses where risk elevates
      only above a threshold.
\item \textbf{Symmetric non-monotone (W-shape).}
      $Y_i = |X_i + 0.5|\,\mathbf{1}(X_i < 0)
             + |X_i - 0.5|\,\mathbf{1}(X_i \ge 0)
             + 1.5\lambda\,\varepsilon_i$. Symmetric W-shape;
      mimics dose--response relationships in which effects are
      strongest at both low and high covariate values.
\item \textbf{Oscillating (sinusoid).}
      $Y_i = \sin(4\pi X_i) + 1.5\lambda\,\varepsilon_i$. Mimics
      circadian-type or otherwise periodic patterns.
\item \textbf{Heteroscedastic.}
      $Y_i = (1 + 2|X_i|)\,\lambda\,\varepsilon_i$. The
      conditional mean is zero for all $X_i$, but the conditional
      standard deviation varies with $|X_i|$. Mimics
      distributional dependence where the mean is flat but tails
      become dispersed at extreme covariate values (e.g.\
      BMI--CRP-type patterns). Because $\lambda$ enters purely
      as an overall scale factor and the coefficients considered
      are scale-invariant, the reported power does not depend on
      $\lambda$; the effective signal is the ratio
      $\mathrm{sd}(Y \mid |X| = 1) / \mathrm{sd}(Y \mid X = 0) = 3$.
\end{enumerate}

Scenarios (S2) and (S3) are monotone or piecewise monotone;
scenarios (S4)--(S6) are non-monotone in one of the three ways
that arise in biomarker practice: fluctuation in the mean, tail
inflation without mean shift, and periodic response. Scenario
(S1) is the null.

\subsection{Methods compared}\label{sec:sim-methods}

The proposed methods $T_{\mathrm{cut}}$ (composite
likelihood ratio, maximized over the bandwidth grid) and
$\widehat\xi_{AD}^*$ (Fisher-weighted $L^2$, its second-order
approximation) are compared with three recent
independence-characterising coefficients:
\begin{itemize}[leftmargin=1.5em]
\item Chatterjee's coefficient $\xi_{\text{Ch}}$
      \citep{chatterjee2021new}, the primary rank-based benchmark;
\item The nearest-neighbour coefficient $\nu_{\text{AR}}$ of
      \citet{azadkia2025integrated}, an integrated-$R^2$
      construction on the nearest-neighbour graph;
\item The boosted Chatterjee coefficient $\xi_{\text{Ch,boost}}$
      of \citet{linhan2023boosting}, with boost parameter $M = 50$.
\end{itemize}
For $T_{\mathrm{cut}}$ and $\widehat\xi_{AD}^*$ the
bandwidth is selected by the composite-likelihood-maximization
rule of Section~\ref{sec:bandwidth} over the five-point grid
$\mathcal{R} = \{0.25, 0.40, 0.60, 0.80, 1.00\} \cdot h_0$.
All methods use permutation calibration with $B = 999$
replications, ensuring that empirical size comparisons reflect
statistical performance rather than differences in calibration.

\subsection{Results}\label{sec:sim-results}

Table~\ref{tab:power} reports empirical power at nominal level
$\alpha = 0.05$ for $n = 200$ and $\lambda = 0.5$.

We highlight three qualitative patterns.

\paragraph{Correct size and power gains on non-monotone
alternatives.}
All permutation-calibrated methods achieve empirical size close
to nominal $0.05$ under the null (S1), confirming that the
equivariance argument of Section~\ref{sec:permutation} yields
correct calibration under the max-over-grid rule. On the
symmetric non-monotone (S4) and heteroscedastic (S6)
alternatives---where uniform-weight rank-based coefficients lose
power---$T_{\mathrm{cut}}$ substantially outperforms
Chatterjee's $\xi_{\text{Ch}}$ and the Azadkia--Roudaki
$\nu_{\text{AR}}$: at $n = 200$ the power ratio is approximately
$2.7$--$3.2\times$ against $\nu_{\text{AR}}$ (S4:
$0.321/0.118$; S6: $0.420/0.130$) and $3.2$--$4.7\times$ against
$\xi_{\text{Ch}}$ (S4: $0.321/0.100$; S6: $0.420/0.090$). These
are the settings where the shape of the conditional distribution
differs across the covariate range, and the Fisher weighting
inherent in the Bernoulli likelihood ratio amplifies the tail
and shape contributions that uniform-weight coefficients dilute.

\paragraph{Boosted-coefficient limitation on oscillating
alternatives.}
The boosted Chatterjee coefficient $\xi_{\text{Ch,boost}}^{M=50}$
achieves the highest power on the linear alternative (S2, 0.754)
but reports empirical power near zero on the oscillating
alternative (S5). This behaviour reflects the small effective
sample size at each of the $M$ boosting subsamples: the
oscillating dependence in (S5) is easily missed at small
subsample size, and the boosting aggregate inherits this loss.
It is a known limitation of the boosting kernel
\citep{linhan2023boosting}. In contrast, $T_{\mathrm{cut}}$
retains competitive power on both alternatives.

\paragraph{Near-equivalence of $\widehat\xi_{AD}^*$ with
$T_{\mathrm{cut}}$.}
The composite likelihood ratio $T_{\mathrm{cut}}$ and
its second-order approximation $\widehat\xi_{AD}^*$ are
empirically close on Table~\ref{tab:power}, consistent with the
second-order agreement of Section~\ref{sec:taylor}; either
statistic can be used in practice at moderate signal levels.

\subsection{Sensitivity to sample size and regularization}
\label{sec:sim-sensitivity}

The Supplementary Material reports the same table for $n = 100$
and $n = 500$; the qualitative ranking is stable across sample
sizes, with all methods gaining power as $n$ grows. The
Supplementary Material also reports sensitivity of
$T_{\mathrm{cut}}$ to the regularization parameter
$\varepsilon \in \{0, 0.05\}$, which shows that the statistic
and its permutation-calibrated power are stable across these
values. We use $\varepsilon = 0$ in the reported main-table
results.

\section{Application to the aging plasma proteome}\label{sec:aging}

We illustrate the proposed method on the aging plasma proteome
example dataset distributed with the \texttt{DEswan} R package
\citep{shen2020deswan}, a publicly accessible subset of
measurements originally reported by
\citet{lehallier2019undulating} in \emph{Nature Medicine}. The
Nature Medicine study measured $2{,}925$ plasma proteins in
$4{,}263$ individuals aged $18$--$95$ years and identified
non-linear ``waves'' of protein change in human aging. The full
$n = 4{,}263$ dataset is subject to consortium-level data access
agreements (SomaLogic, LonGenity, and multiple biobank
repositories) that restrict its distribution and preclude the
inclusion of the raw data in a methodological paper. The
authors of the Nature Medicine study made a subset of the
measurements---$1{,}305$ proteins in $171$ subjects, drawn from
four smaller cohorts (Seattle, PRIN06, PRIN09, and GEHA)---
publicly available through the \texttt{DEswan} R package as an
example dataset for reproducing the sliding-window methodology
of the source paper.

We treat the analysis as an illustrative proof-of-concept
application of the proposed test, not as a definitive biomarker
discovery analysis. The reported counts are exploratory,
computed at the nominal $5\%$ level without adjustment for
multiplicity; we discuss multiplicity in
Section~\ref{sec:aging-results}. We use this publicly available
subset for our application to ensure that every number in this
section can be reproduced from a single R script
(Web Appendix~G) that downloads the data from the package
repository and runs end-to-end with no data-access barriers.

While the DEswan subset is much smaller than the source study,
it is a well-characterized publicly available dataset drawn from
the study of \citet{lehallier2019undulating}. Its four
constituent cohorts differ substantially in age range and
provide, in the Seattle cohort ($n = 70$, ages $21$--$88$), a
single-site sample with the widest single-cohort age span in
the data. The sample size ($n = 70$) is slightly below the
lower end of the paper's target operating range of
$n \approx 100$--$500$ used in the simulations
(Section~\ref{sec:simulations}), which contributes to the
descriptive character of the analysis reported below.

\subsection{Choice of study population}\label{sec:aging-cohort}

The four cohorts in the DEswan subset span very different age
ranges (Seattle: $21$--$88$, PRIN06: $50$--$102$, PRIN09:
$56$--$107$, GEHA: $89$--$104$), and GEHA is additionally a
selected sample of long-lived individuals. Because cohorts
differ in collection site, processing pipeline and subject
selection, protein levels differ systematically between
cohorts at matched ages, and a marginal test of independence
between age and protein abundance across the pooled cohorts
would confound age dependence with between-cohort batch
differences.

We therefore restrict the analysis to the Seattle cohort
($n = 70$): a single collection site with a single processing
pipeline, spanning the widest single-cohort age range
($21$--$88$ years). This choice removes the between-cohort
confounding described above at the cost of reduced sample size.
We do not attempt a pooled analysis of the full $n = 171$
sample, because any between-cohort adjustment would test a
different question and its interpretation would depend on the
model for the cohort effect.

\subsection{Testing setup}\label{sec:aging-setup}

For each of the $1{,}305$ plasma proteins we test independence
between chronological age and log protein abundance in the
Seattle cohort. Four methods are compared: Pearson $r$,
Spearman $\rho$, Chatterjee's $\xi_{\text{Ch}}$, and the
proposed $T_{\mathrm{cut}}$. All four use exact permutation
calibration with $B = 999$ replicates.

Age is recorded in integer years and takes only $47$ distinct
values among the $70$ Seattle subjects. Chatterjee's
$\xi_{\text{Ch}}$ requires that ties in $X$ be broken at
random; we implement this by ordering the sample on $X$ with
ties resolved uniformly by a random permutation of tied
positions, redrawing the tie-breaking permutation on each
call to the statistic (both the observed value and the
permutation replicates). This matches the standard
implementation in the \texttt{XICOR} R package.

\subsection{Results: a complementary picture}\label{sec:aging-results}

Classical rank-based coefficients and the proposed
$T_{\mathrm{cut}}$ target different aspects of the
age--protein dependence structure. Pearson's $r$ detects linear
trends in mean protein abundance; Spearman's $\rho$ detects
monotone trends in the ranks; Chatterjee's $\xi_{\text{Ch}}$
detects functional dependence with uniform threshold weighting;
$T_{\mathrm{cut}}$ detects any change in the full
conditional distribution of protein abundance with age, with
Fisher-type emphasis across thresholds. In a biomarker-screening
application, the four coefficients therefore answer four
distinct scientific questions, and are best used in combination.

Table~\ref{tab:aging} reports rejection counts under both
unadjusted $p < 0.05$ and Benjamini--Hochberg (BH) FDR control
at $q < 0.05$. Under BH control $T_{\mathrm{cut}}$ rejects on
$70$ proteins ($5.4\%$), Pearson on $112$ ($8.6\%$), Spearman
on $142$ ($10.9\%$), and Chatterjee $\xi_{\text{Ch}}$ on none
of the panel. The BH-controlled sets overlap substantially but
are not nested: six proteins are rejected by $T_{\mathrm{cut}}$
at $q < 0.05$ but by none of the three classical methods at the
same FDR level. Under the more liberal unadjusted $5\%$ level,
$T_{\mathrm{cut}}$ rejects on $207$ proteins, of which $24$ are
missed by all three classical methods.

That Chatterjee's $\xi_{\text{Ch}}$ reports no BH-controlled
rejections is a consequence of its permutation calibration
across $m = 1{,}305$ tests: at $B = 999$ the minimum permutation
$p$-value is $0.001$, and at the smallest BH threshold
$0.05 / m \approx 3.8 \times 10^{-5}$ no single rejection can
survive. $T_{\mathrm{cut}}$ recovers BH-controlled findings on
the same permutation grid by concentrating enough of its
$p$-values close to the minimum: it has $70$ proteins with
$p$-values below the BH threshold at $q < 0.05$.

The pattern of non-rejection by the classical methods does not
by itself identify the shape of the alternative---the classical
procedures may lack power against weak monotone dependence as
well as against non-monotone alternatives---but the overlap
structure demonstrates that no single coefficient dominates on
this dataset.

Figure~\ref{fig:aging} shows all six proteins on which
$T_{\mathrm{cut}}$ rejects under BH control at $q < 0.05$ while
none of the three classical methods reject at the same FDR
level. Panels (a)--(f) show Feature\_167, Feature\_376,
Feature\_516, Feature\_523, Feature\_533, and Feature\_1260. In
every panel the ordinary least-squares fit is essentially flat
(the fitted slope explains negligible variance), while the LOESS
smoother traces distributional structure across age that is not
captured by the location trend. For $T_{\mathrm{cut}}$,
per-protein permutation $p$-values are $\le 0.002$; the
BH-adjusted $q$-values are in the range $0.019$--$0.037$. Some
of the six proteins have a single classical unadjusted $p$-value
below $0.05$ (e.g., Spearman for Feature\_167, Feature\_376 and
Feature\_1260), but none survives BH control across the
$m = 1{,}305$-test panel. We caution that at $n = 70$ LOESS
wiggles are noisy, and the visual patterns should be read as
illustrative of the structure $T_{\mathrm{cut}}$ detects rather
than as definitive shape claims for individual proteins.

\subsection{Interpretation}\label{sec:aging-interp}

The $6$ proteins where $T_{\mathrm{cut}}$ rejects under
BH control at $q < 0.05$ but the three classical methods do not
(and the further $24$ proteins where $T_{\mathrm{cut}}$
nominally rejects at $p < 0.05$ while the classical methods do
not) exhibit dependence patterns that classical rank- and
correlation-based coefficients did not surface in this scan.
Several show visibly non-monotone age patterns; whether any of
these represent \emph{life-stage-specific} biomarkers---proteins
whose plasma abundance changes shape at particular life stages
such as reproductive maturity, mid-life metabolic transitions,
or the onset of senescence---would require independent
replication in a larger cohort with, ultimately, biological
follow-up. The larger source study
\citep{lehallier2019undulating} identified non-linear waves of
proteomic change at approximately ages $34$, $60$ and $78$; the
Seattle subset used here contains too few subjects under age
$50$ to detect those specific waves at the individual-protein
level with adequate power, and we do not attempt to do so.

More broadly, the analysis illustrates that a complete
biomarker screen benefits from applying multiple dependence
coefficients targeting different aspects of the joint
distribution: linear methods for aging clocks, monotone rank
tests for progression markers, and distributional tests such as
$T_{\mathrm{cut}}$ for transition markers. Each
method contributes a distinct set of hits, and the union is
strictly larger than any single method's output.

\section{Discussion}\label{sec:discussion}

The composite Bernoulli likelihood-ratio formulation developed
here provides a route to nonparametric independence testing that
is anchored in likelihood principles while avoiding the
instability of conditional-density estimation. Classical
dependence measures---Pearson's $r$, Spearman's $\rho$, and
Kendall's $\tau$---target linear, monotone, or concordant
association and can vanish under pronounced non-monotone
dependence. Generalizing the parametric likelihood ratio to a
binomial-cut composite likelihood ratio yields a population
functional that characterizes independence, with the Fisher
weighting of Anderson--Darling-type statistics emerging at the
second order as a natural consequence of the Bernoulli
likelihood rather than an external design choice. On biomedical data with variance and tail dependence, the
resulting permutation test detects associations that rank-based
coefficients miss.

\section*{Data Availability Statement}

The aging plasma proteome dataset used in Section~\ref{sec:aging}
is publicly available in the DEswan R package repository at
\url{https://github.com/lehallib/DEswan} \citep{shen2020deswan}.
A single self-contained R script reproducing all numerical
results in Section~\ref{sec:aging} and Table~\ref{tab:aging},
including automated data download, is provided as part of the
online Supplementary Material. A companion reproducibility
bundle including this script, the simulation verification
script, the figure source, and all manuscript files is
available upon request from the corresponding author.

\section*{AI Use Statement}

Anthropic's Claude was used as an editorial and drafting
assistant during manuscript preparation, including for
copy-editing, wording suggestions, LaTeX debugging, response
letter drafting, and cross-reference checking. All statistical
methodology, mathematical derivations, simulation code,
biomedical data analysis, and scientific claims are those of
the author, who verified every generated numerical result and
takes full responsibility for the content of the manuscript.

\section*{Supplementary Materials}

Web Appendices A--G contain proofs of the population properties
(A), the second-order equivalence with $\xi_{AD}$ (B), the
permutation-exactness argument (C), an information-theoretic
interpretation and lower bound on $I(X; Y)$ (D), interpretations
under location-shift, location-scale, and Cox proportional
hazards models (E), additional simulation results and
sensitivity analyses (F), and computational details of the
aging-proteome application together with the reproduction
script (G).

\section*{Funding Statement}

This research was supported in part by the Intramural Research
Program of the National Institutes of Health (NIH). The
contributions of the NIH author(s) are considered Works of the
United States Government. The findings and conclusions
presented in this paper are those of the author(s) and do not
necessarily reflect the views of the NIH or the U.S.\
Department of Health and Human Services.

\begingroup
\small
\setlength{\bibsep}{4pt}

\endgroup


\begin{table}[!ht]
\centering
\renewcommand{\arraystretch}{1.1}
\caption{Empirical power at nominal level $\alpha = 0.05$ for
six scenarios at $n = 200$ and $\lambda = 0.5$ (moderate
signal-to-noise), based on $1000$ Monte Carlo replications with
$B = 999$ permutations. Under (S1) the entry is empirical size.
Bold indicates the two-largest values in each row (excluding
size). Analogous results at $n \in \{100, 500\}$ and
$\lambda \in \{0.25, 0.75\}$ are in Web Appendix~F.}
\label{tab:power}
\begin{tabular}{lccccc}
\hline
Scenario
  & $\xi_{\text{Ch}}$
  & $\nu_{\text{AR}}$
  & $\xi_{\text{Ch,boost}}^{M=50}$
  & $\widehat\xi_{AD}^*$
  & $T_{\mathrm{cut}}$ \\
\hline
S1 Null                    & 0.041 & 0.051 & 0.043 & 0.049 & 0.051 \\
S2 Linear                  & 0.112 & 0.121 & \textbf{0.754} & \textbf{0.396} & 0.386 \\
S3 Threshold               & 0.989 & 0.984 & \textbf{1.000} & \textbf{1.000} & \textbf{1.000} \\
S4 Symmetric non-monotone  & 0.100 & 0.118 & 0.048 & \textbf{0.327} & \textbf{0.321} \\
S5 Oscillating             & \textbf{1.000} & \textbf{1.000} & 0.000 & \textbf{1.000} & \textbf{1.000} \\
S6 Heteroscedastic         & 0.090 & 0.137 & 0.258 & \textbf{0.446} & \textbf{0.420} \\
\hline
\end{tabular}
\end{table}

\begin{table}[!ht]
\centering
\renewcommand{\arraystretch}{1.1}
\caption{Scan of $1{,}305$ plasma proteins for age
dependence on the Seattle cohort of the aging plasma proteome
dataset \citep{shen2020deswan, lehallier2019undulating}
($n = 70$ subjects, ages $21$--$88$; single collection site).
All four methods use exact permutation calibration with
$B = 999$ replicates. The four methods target
different aspects of the joint distribution and reject on
overlapping but non-nested sets of proteins. Under
Benjamini--Hochberg control at $q < 0.05$,
$T_{\mathrm{cut}}$ rejects on $70$ proteins ($5.4\%$),
of which $6$ are missed by all three classical methods at the
same FDR level. Figure~\ref{fig:aging} illustrates the
$T_{\mathrm{cut}}$-versus-classical complementarity on all six
of these $T_{\mathrm{cut}}$-only rejections.}
\label{tab:aging}
\begin{tabular}{lcccc}
\hline
Method & \multicolumn{2}{c}{Nominal $p<0.05$}
       & \multicolumn{2}{c}{BH $q<0.05$} \\
       & Count & \% & Count & \% \\
\hline
Pearson correlation ($r$)     & 249 & 19.1\% & 112 & \phantom{0}8.6\% \\
Spearman correlation ($\rho$) & 289 & 22.1\% & 142 & 10.9\% \\
Chatterjee $\xi_{\text{Ch}}$  & 130 & 10.0\% & \phantom{00}0 & \phantom{0}0.0\% \\
$T_{\mathrm{cut}}$ (proposed) & 207 & 15.9\% & \phantom{0}70 & \phantom{0}5.4\% \\
\hline
\multicolumn{5}{l}{\emph{$T_{\mathrm{cut}}$ rejects but the alternative does not:}} \\
                              & \multicolumn{2}{c}{Nominal $p<0.05$} & \multicolumn{2}{c}{BH $q<0.05$} \\
vs.\ Pearson                  & \multicolumn{2}{c}{63} & \multicolumn{2}{c}{13} \\
vs.\ Spearman                 & \multicolumn{2}{c}{43} & \multicolumn{2}{c}{\phantom{0}6} \\
vs.\ Chatterjee $\xi_{\text{Ch}}$ & \multicolumn{2}{c}{106} & \multicolumn{2}{c}{70} \\
vs.\ all three classical methods  & \multicolumn{2}{c}{\textbf{24}} & \multicolumn{2}{c}{\textbf{6}} \\
\hline
\end{tabular}
\end{table}

\begin{figure}[!ht]
\centering
\includegraphics[width=0.98\textwidth]{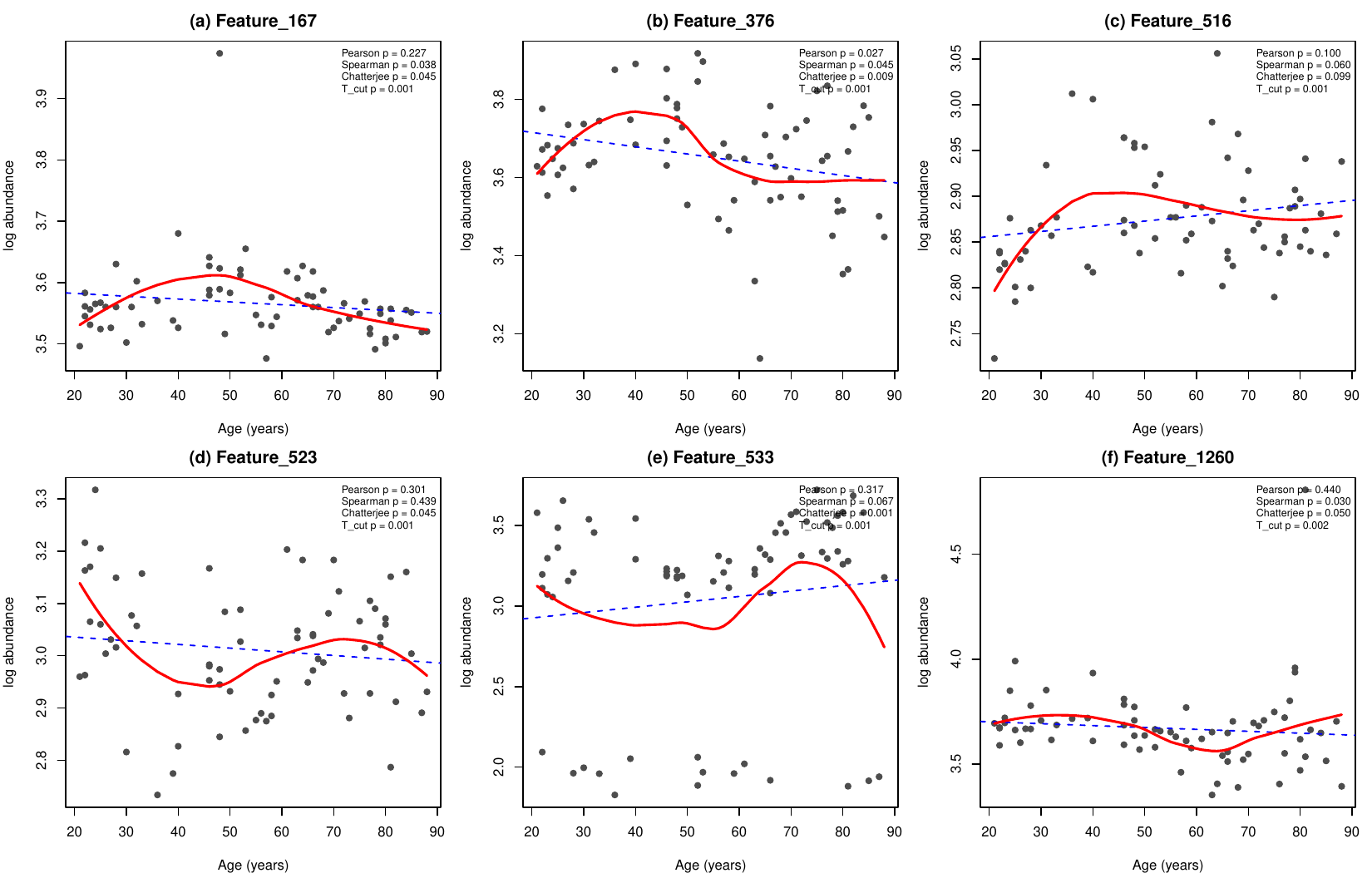}
\caption{The six plasma proteins from the Seattle cohort of the
aging proteome dataset \citep{shen2020deswan,
lehallier2019undulating} on which $T_{\mathrm{cut}}$ rejects
under Benjamini--Hochberg control at $q < 0.05$ while none of
the three classical methods (Pearson, Spearman, Chatterjee
$\xi_{\text{Ch}}$) reject at the same FDR level. Grey points
are the $n = 70$ Seattle observations; blue dashed lines are
ordinary least-squares fits (essentially flat in all six
panels); red solid curves are LOESS smoothers, which trace the
distributional structure that $T_{\mathrm{cut}}$ detects.
Per-panel unadjusted $p$-values from each of the four methods
(all four from $B = 999$ permutation calibration) are shown in
the top right of each panel. Some proteins have a single
classical $p$-value below $0.05$ (e.g., Spearman for
Feature\_167, Feature\_376 and Feature\_1260), but none of
these survives Benjamini--Hochberg control across
$m = 1{,}305$ tests whereas $T_{\mathrm{cut}}$ does. The
BH-adjusted $q$-values for $T_{\mathrm{cut}}$ on these six
proteins range from $0.019$ to $0.037$.}
\label{fig:aging}
\end{figure}

\clearpage
\appendix

\setcounter{section}{0}
\renewcommand{\thesection}{\Alph{section}}
\renewcommand{\thesubsection}{\Alph{section}.\arabic{subsection}}
\renewcommand{\theequation}{\Alph{section}.\arabic{equation}}
\renewcommand{\thefigure}{S\arabic{figure}}
\renewcommand{\thetable}{S\arabic{table}}
\setcounter{equation}{0}
\setcounter{figure}{0}
\setcounter{table}{0}

\begin{center}
{\Large\bfseries Supplementary Material}\\[0.5em]
Web Appendices A--G
\end{center}
\bigskip

\section*{Notation}
\addcontentsline{toc}{section}{Notation}

This supplement uses the following notation, consistent with the
main paper:

\begin{itemize}\setlength\itemsep{2pt}
  \item $F(t) = \Pr(Y \le t)$: marginal CDF of $Y$;
  \item $F(t \mid x) = \Pr(Y \le t \mid X = x)$: conditional CDF;
  \item $\phi(u, v) = u \log(u/v) + (1-u)\log((1-u)/(1-v))$:
        Bernoulli KL divergence;
  \item $\phi_\varepsilon(u, v) = (u+\varepsilon) \log\!\bigl((u+\varepsilon)/(v+\varepsilon)\bigr)
        + (1-u+\varepsilon) \log\!\bigl((1-u+\varepsilon)/(1-v+\varepsilon)\bigr)$:
        $\varepsilon$-regularized version;
  \item $M(v) = v \log(1/v) + (1-v)\log(1/(1-v))$: upper bound on $E[\phi(U, v)]$ under $E[U] = v$, $U \in [0,1]$,
        with $\int_0^1 M(v)\, dv = 1/2$;
  \item $M_\varepsilon(v) = v\, \phi_\varepsilon(1, v) + (1-v)\, \phi_\varepsilon(0, v)$:
        analogous averaged upper bound for $\phi_\varepsilon$;
  \item $\xi_{\mathrm{cut}}(X,Y) = 2 \int \E_X[\phi(F(t\mid X), F(t))]\, dF(t)$:
        the population coefficient (main paper equation (4));
  \item $\xi_{AD}(X,Y) = \int \E_X[(F(t\mid X) - F(t))^2] / (F(t)(1-F(t)))\, dF(t)$:
        the Fisher-weighted $L^2$ counterpart (main paper equation (5)).
\end{itemize}

Results in this supplement are labelled with an S prefix (Theorem S1,
Lemma S1.2, etc.) to distinguish them from results in the main paper.

Throughout, we assume $Y$ has a continuous marginal distribution
function $F$ with interval support, and assume $F$ is strictly
increasing on that support. This holds whenever $Y$ has a density
that is positive on its support, which is the standard regularity
in biomarker practice. Under this condition, $F$ is atomless and
the probability integral transform $F(Y) \sim \mathrm{Uniform}[0, 1]$
is exact, so the change of variable $\int g(F(t))\, dF(t) =
\int_0^1 g(v)\, dv$ holds for any measurable $g$. This is the
same assumption as in the main paper's Section~2.1, made explicit
here for the proofs.

\section{Proofs of the population properties (Theorem 1)}

This section proves the four parts of Theorem~2.1 of the main paper.
We begin with a technical lemma collecting properties of the
Bernoulli KL divergence.

\subsection{Properties of \texorpdfstring{$\phi$}{phi}}

\begin{slemma}[Properties of $\phi$]
\label{lem:phi-props}
For every $v \in (0, 1)$:
\begin{enumerate}[label=\textup{(\roman*)},leftmargin=2em]
  \item (Convexity and nonnegativity.) $u \mapsto \phi(u, v)$
        is strictly convex on $(0, 1)$ with $\phi(u, v) \ge 0$,
        equality iff $u = v$.
  \item (Symmetry.) $\phi(u, v) = \phi(1-u, 1-v)$
        and $M(v) = M(1-v)$.
  \item (Upper envelope.) If $\E(U) = v$ and $U \in [0, 1]$
        almost surely, then
        $\E[\phi(U, v)] \le M(v)$, with equality iff
        $U \in \{0, 1\}$ a.s.\ with $\Pr(U = 1) = v$.
  \item (Entropy identity.)
        $\int_0^1 M(v)\, dv = 1/2$.
\end{enumerate}
The regularized statements for $\phi_\varepsilon$: parts (i)--(iii)
and the boundedness envelope
$\sup_{u, v \in [0, 1]} |\phi_\varepsilon(u, v)|
\le 2(1 + \varepsilon)\log((1 + \varepsilon)/\varepsilon) < \infty$
hold analogously. The normalization identity in part (iv) is
specific to $\varepsilon = 0$: for $\varepsilon > 0$,
$\int_0^1 M_\varepsilon(v)\, dv < 1/2$, with equality in the limit
$\varepsilon \downarrow 0$. For example,
$\int_0^1 M_{0.01}(v)\, dv \approx 0.4634$.
\end{slemma}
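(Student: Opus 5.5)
I will prove the four parts of Lemma~\ref{lem:phi-props} in order, all by elementary calculus on the explicit formula for $\phi$ and by Jensen's inequality. The key identity throughout is
\[
\phi(u,v) \;=\; -H(u) \;-\; u\log v \;-\; (1-u)\log(1-v),
\]
with $H(u) = -u\log u-(1-u)\log(1-u)$ the binary entropy and the convention $0\log 0 = 0$. This exhibits $u\mapsto\phi(u,v)$ as $-H(u)$ plus an affine function of $u$.

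\textbf{Part (i).} Compute $\partial_u\phi = \log\{u/(1-u)\} - \log\{v/(1-v)\}$ and $\partial_u^2\phi = 1/\{u(1-u)\} > 0$ on $(0,1)$. This gives strict convexity. The first derivative vanishes exactly at $u=v$, and $\phi(v,v)=0$, so $v$ is the unique minimizer. Hence $\phi \ge 0$ with equality iff $u=v$. By continuity the inequality extends to the endpoints $u\in\{0,1\}$, where $\phi(0,v) = -\log(1-v)>0$ and $\phi(1,v) = -\log v > 0$.

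\textbf{Part (ii).} Substituting $u\mapsto 1-u$ and $v\mapsto 1-v$ simply swaps the two summands of $\phi$, so the symmetry is immediate. Likewise $M(v)=H(v)$ is symmetric under $v\mapsto 1-v$.

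\textbf{Part (iii).} By convexity, on $[0,1]$ the function $\phi(\cdot,v)$ lies below its chord, which is the line through $(0,\phi(0,v))$ and $(1,\phi(1,v))$:
\[
\phi(u,v) \;\le\; (1-u)\,\phi(0,v) \;+\; u\,\phi(1,v).
\]
Take expectations and use $E U = v$. The right-hand side becomes
\[
(1-v)\{-\log(1-v)\} \;+\; v\{-\log v\} \;=\; M(v).
\]
Strict convexity makes the chord inequality strict for $u\in(0,1)$. Equality in expectation therefore forces $U\in\{0,1\}$ almost surely, and then $\Pr(U=1)=E U=v$. The converse is direct: for $U\sim$ Bernoulli$(v)$,
\[
E\,\phi(U,v) \;=\; v\,\phi(1,v) + (1-v)\,\phi(0,v) \;=\; M(v).
\]
The same identity also gives the alternative form $E\phi(U,v) = H(v) - E H(U)$ quoted in the main text.

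\textbf{Part (iv).} Use $\int_0^1 -v\log v\,dv = 1/4$. By the symmetry $v\mapsto 1-v$, the term $\int_0^1 -(1-v)\log(1-v)\,dv$ also equals $1/4$, so $\int_0^1 M(v)\,dv = 1/2$.

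\textbf{Regularized statements.} For $\phi_\varepsilon$, set $a=u+\varepsilon$ and $b=1-u+\varepsilon$. The second derivative in $u$ is
\[
\partial_u^2\phi_\varepsilon \;=\; \frac{1}{u+\varepsilon} + \frac{1}{1-u+\varepsilon} \;>\; 0,
\]
so $\phi_\varepsilon(\cdot,v)$ is strictly convex. The first derivative vanishes at $u=v$ (where $a/b$ equals $(v+\varepsilon)/(1-v+\varepsilon)$), and $\phi_\varepsilon(v,v)=0$, so nonnegativity follows as in (i). Symmetry is the same swap argument as in (ii), and the chord argument of (iii) carries over verbatim with $M_\varepsilon$ in place of $M$.

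For the envelope, every argument of a logarithm lies in $[\varepsilon,1+\varepsilon]$. Each ratio inside a log is therefore in $[\varepsilon/(1+\varepsilon),(1+\varepsilon)/\varepsilon]$, each prefactor is at most $1+\varepsilon$, and summing the two terms gives the stated bound.

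The claim $\int_0^1 M_\varepsilon < 1/2$ is the only step needing care, and I expect it to be the main obstacle. I would compute $\int_0^1 M_\varepsilon(v)\,dv$ in closed form, since it is elementary integrals of $x\log x$. Alternatively, I would show $\partial_\varepsilon M_\varepsilon(v) < 0$ for $\varepsilon>0$; the sign of that derivative should follow from convexity of $x\log x$. Continuity at $\varepsilon=0$ then gives the limit $1/2$, and the numerical value $0.4634$ at $\varepsilon=0.01$ would be checked from the closed form.
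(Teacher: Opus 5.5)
Your proof of (i)--(iv) and of the regularized (i)--(iii) matches the paper's: the second derivative and the unique critical point $u=v$, the chord inequality with strict convexity for the equality case, and $-2\int_0^1 v\log v\,dv=1/2$. Your envelope bound is also correct, and it proves something the paper only asserts. The paper likewise only asserts $\int_0^1 M_\varepsilon<1/2$, and your closed-form plan settles it: with $g(x)=x\log x$ one has $M_\varepsilon(v)=g(\varepsilon)+g(1+\varepsilon)-g(v+\varepsilon)-g(1-v+\varepsilon)$, so
\[
\int_0^1 M_\varepsilon(v)\,dv=\tfrac12+\varepsilon-\varepsilon(1+\varepsilon)\log\{(1+\varepsilon)/\varepsilon\}.
\]
This is below $1/2$ by $\log(1+x)>x/(1+x)$ at $x=1/\varepsilon$, and it equals about $0.4634$ at $\varepsilon=0.01$. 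In your alternative route, $\partial_\varepsilon M_\varepsilon(v)=\log\{\varepsilon(1+\varepsilon)/[\varepsilon(1+\varepsilon)+v(1-v)]\}<0$ follows from this explicit form, or equivalently from concavity of $\log$. It does not follow from convexity of $x\log x$, which by itself does not determine the sign.
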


\begin{proof}
\emph{Part (i).} Differentiating $\phi(u, v)$ twice with respect
to $u$ gives
\[
\frac{\partial^2}{\partial u^2} \phi(u, v)
= \frac{1}{u} + \frac{1}{1 - u} > 0,
\qquad u \in (0, 1),
\]
so $\phi(\cdot, v)$ is strictly convex on $(0, 1)$. Since
$\phi(v, v) = 0$ and $\partial_u \phi(u, v)|_{u = v} = 0$, the point
$u = v$ is the unique minimizer and $\phi(u, v) \ge 0$ with equality
iff $u = v$.

\emph{Part (ii).} From the definition,
\[
\phi(1 - u, 1 - v)
= (1 - u) \log\!\frac{1 - u}{1 - v}
+ u \log\!\frac{u}{v}
= \phi(u, v).
\]
Then $M(1 - v) = (1-v)\phi(1, 1-v) + v \phi(0, 1-v)
= (1-v)\phi(0, v) + v \phi(1, v) = M(v)$.

\emph{Part (iii).} By convexity of $\phi(\cdot, v)$ on $[0, 1]$,
the chord inequality gives
\[
\phi(u, v) \le (1 - u)\, \phi(0, v) + u\, \phi(1, v),
\qquad 0 \le u \le 1.
\]
Taking expectations of both sides with $\E U = v$,
\[
\E[\phi(U, v)]
\le (1 - v)\, \phi(0, v) + v\, \phi(1, v)
= M(v).
\]
By strict convexity of $\phi(\cdot, v)$, the chord inequality is
strict whenever $u \in (0, 1)$. Therefore equality in the
integrated form requires $U \in \{0, 1\}$ almost surely, in which
case $\E U = v$ forces $\Pr(U = 1) = v$ and $\Pr(U = 0) = 1 - v$.

\emph{Part (iv).}
$\int_0^1 M(v)\, dv = -2 \int_0^1 v \log v\, dv$
$= 2 \cdot [-v^2/2 \cdot \log v + v^2/4]_0^1$
$= 1/2$.

Parts~(i)--(iii) for $\phi_\varepsilon$ follow by the same
arguments; part~(iv) does not, since
$\int_0^1 M_\varepsilon(v)\, dv$ depends on $\varepsilon$ (see the
statement of the lemma).
\end{proof}

\subsection{Range}

\begin{proof}[Proof of Theorem 1(i)]
Nonnegativity is immediate from
Lemma~\ref{lem:phi-props}(i): $\phi(F(t\mid X), F(t)) \ge 0$ pointwise,
so the integral is nonnegative.

For the upper bound, apply Lemma~\ref{lem:phi-props}(iii) with
$U_t := F(t \mid X)$: since $\E[U_t] = F(t)$ and $U_t \in [0, 1]$,
$\E[\phi(U_t, F(t))] \le M(F(t))$. Integrating and using
Lemma~\ref{lem:phi-props}(iv),
\[
\xi_{\mathrm{cut}}
= 2 \int \E[\phi(F(t\mid X), F(t))]\, dF(t)
\le 2 \int M(F(t))\, dF(t)
= 2 \int_0^1 M(v)\, dv
= 1,
\]
where the change of variable $v = F(t)$ used continuity of $F$
(so the pushforward of $dF$ is $\text{Uniform}[0,1]$).
\end{proof}

\subsection{Preliminary: support containment}

Both the independence characterization (Theorem 1(ii)) and the
functional-dependence characterization (Theorem 1(iii)) use the
following basic fact: the conditional distribution of $Y$ given
$X$ is almost surely supported inside the support of the marginal
distribution of $Y$. We state this as a lemma for later
reference.

\begin{slemma}[Support containment]
\label{lem:support-containment}
Let $Y$ be a real-valued random variable with distribution $\nu$
(so $\nu((-\infty, t]) = F(t)$), and let $\nu_x$ denote the
regular conditional distribution of $Y$ given $X = x$. Then
$\mathrm{supp}(\nu_x) \subseteq \mathrm{supp}(\nu)$ for
$\Pr_X$-almost every $x$.
\end{slemma}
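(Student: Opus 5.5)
The plan is to show that the complement of $\mathrm{supp}(\nu)$ is a $\nu$-null open set, and then use the tower property to transfer this null set to almost every conditional law $\nu_x$. Recall that $\mathrm{supp}(\nu)$ is the smallest closed set of full $\nu$-measure. Equivalently, its complement $G := \mathbb R \setminus \mathrm{supp}(\nu)$ is the union of all open sets $O$ with $\nu(O) = 0$. Since $\mathbb R$ is second countable, $G$ is a countable union of $\nu$-null open intervals with rational endpoints, so $G$ is open and $\nu(G) = 0$.

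The key identity is the disintegration formula for a regular conditional distribution: for every Borel set $A$,
\[
\nu(A) \;=\; \Pr(Y \in A) \;=\; \int \nu_x(A)\, \Pr_X(dx).
\]
Applying it with $A = G$ gives $\int \nu_x(G)\, \Pr_X(dx) = 0$. Since $\nu_x(G) \ge 0$, we get $\nu_x(G) = 0$ for $\Pr_X$-almost every $x$. Note that this is a single null set, because $G$ is one fixed Borel set; no union over uncountably many sets is needed.

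Now fix such an $x$. Then $\mathrm{supp}(\nu)$ is a closed set of full $\nu_x$-measure. By the definition of the support as the smallest closed set of full measure, $\mathrm{supp}(\nu_x) \subseteq \mathrm{supp}(\nu)$. This holds on the $\Pr_X$-full set of such $x$, which is the claim.

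The only step that needs care, and which I expect to be the main obstacle, is justifying $\nu(G)=0$, that is, that the support is well defined as a set of full measure. This is where second countability of $\mathbb R$ enters: a union of arbitrarily many null open sets could fail to be null in general, but here it reduces to a countable union of rational intervals. Existence of the regular conditional distribution $\nu_x$ is standard for real-valued $Y$ and is assumed in the statement. No continuity of $F$ is needed for this lemma.
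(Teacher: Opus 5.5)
Your proof is correct and rests on the same two ingredients as the paper's proof: a countable base of rational open intervals, and the tower (disintegration) identity $\Pr(Y \in A) = \E[\nu_X(A)]$. The only difference is the order of steps. The paper applies the tower property to each $\nu$-null rational interval and then intersects the resulting countably many full-measure sets of $x$. You instead use the countable base first to get $\nu(\mathrm{supp}(\nu)^c)=0$, and then apply the tower property once to that single set, which is slightly cleaner.
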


\begin{proof}
Let $\mathcal{I}$ denote the countable family of open intervals
in $\R$ with rational endpoints. For any $I \in \mathcal{I}$ with
$\nu(I) = \Pr(Y \in I) = 0$, the tower property gives
\[
0 \;=\; \Pr(Y \in I) \;=\; \E[\nu_X(I)],
\]
so $\nu_x(I) = 0$ for $\Pr_X$-almost every $x$. Taking the
intersection over the countable family
$\{I \in \mathcal{I} : \nu(I) = 0\}$, there is a set $C$ with
$\Pr_X(C) = 1$ such that $\nu_x(I) = 0$ for every such $I$ and
every $x \in C$. Since every open subset of
$\mathrm{supp}(\nu)^c$ is a countable union of rational open
intervals of $\nu$-measure zero, this gives
$\nu_x(\mathrm{supp}(\nu)^c) = 0$ for every $x \in C$, i.e.,
$\mathrm{supp}(\nu_x) \subseteq \mathrm{supp}(\nu)$.
\end{proof}

\subsection{Characterization of independence}

\begin{proof}[Proof of Theorem 1(ii):
$\xi_{\mathrm{cut}} = 0 \iff X \indep Y$]
The reverse direction is immediate: if $X \indep Y$, then
$F(t \mid X) = F(t)$ almost surely for every $t$, and
$\phi(F(t), F(t)) = 0$.

For the forward direction, suppose $\xi_{\mathrm{cut}} = 0$. Since
$\phi \ge 0$ (Lemma~\ref{lem:phi-props}(i)), the integrand vanishes
almost surely under the product measure $F \otimes \Pr_X$: there
exists a set $N \subset \R \times \Omega$ with $(F \otimes \Pr)(N) = 0$
such that $F(t \mid X(\omega)) = F(t)$ for all $(t, \omega) \notin N$.
By Fubini, there is a set $\Omega_0$ of full $\Pr_X$-measure such that,
for every $\omega \in \Omega_0$,
\[
F(t \mid X(\omega)) = F(t)
\qquad \text{for $F$-almost every $t$.}
\]

We now extend to \emph{all} $t$. Under the interval-support
assumption, every nonempty relatively open interval in
$\mathrm{supp}(F)$ has positive $F$-measure, so the complement
of an $F$-null set is dense in $\mathrm{supp}(F)$. Both $t \mapsto F(t \mid X)$ and
$t \mapsto F(t)$ are right-continuous. For any interior point
$t_0$ of $\mathrm{supp}(F)$, take a sequence $t_n \downarrow t_0$
inside $\mathrm{supp}(F)$ on which
$F(t_n \mid X) = F(t_n)$, then right-continuity gives
$F(t_0 \mid X) = F(t_0)$. At the right endpoint $b$ of
$\mathrm{supp}(F)$ (if finite), $F(b) = 1$ and by
Lemma~\ref{lem:support-containment} $F(b \mid X) = 1$ almost
surely, so equality holds at $b$ as well. For $t$ outside
$\mathrm{supp}(F)$ the values of $F$ and $F(\cdot \mid X)$ are
constant $0$ or $1$ on the components of the complement, and
equality follows by monotonicity from the endpoints. Therefore
$F(t \mid X) = F(t)$ almost surely for every $t \in \R$, which is
equivalent to $X \indep Y$.
\end{proof}

\subsection{Characterization of functional dependence}

\begin{proof}[Proof of Theorem 1(iii):
$\xi_{\mathrm{cut}} = 1 \iff Y = g(X)$ a.s.\ for some measurable $g$]
\emph{Reverse direction.} If $Y = g(X)$ almost surely,
$F(t \mid X) = \mathbf{1}\{g(X) \le t\} \in \{0, 1\}$ for every $t$,
so the equality condition in Lemma~\ref{lem:phi-props}(iii) holds
for every $t$: $\E[\phi(U_t, F(t))] = M(F(t))$. Integrating and
using Lemma~\ref{lem:phi-props}(iv) gives $\xi_{\mathrm{cut}} = 1$.

\emph{Forward direction.} Suppose $\xi_{\mathrm{cut}} = 1$.
By Lemma~\ref{lem:phi-props}(iii), equality
$\E[\phi(U_t, F(t))] = M(F(t))$ holds for $F$-almost every $t$,
which by the equality condition gives
\begin{equation}\label{eq:UtBernoulli}
F(t \mid X) \in \{0, 1\}\quad\text{a.s.\ for $F$-a.e.\ }t.
\end{equation}

By Fubini and the interval-support assumption, there exists a
measurable set $B$ with $\Pr_X(B) = 1$ such that, for every
$x \in B$, \eqref{eq:UtBernoulli} holds. Fix such $x$ and let
$\nu_x$ be the conditional distribution of $Y$ given $X = x$. By
Lemma~\ref{lem:support-containment}, there exists $C$ with
$\Pr_X(C) = 1$ such that
$\mathrm{supp}(\nu_x) \subseteq \mathrm{supp}(F)$ for every
$x \in C$.

\emph{Point-mass conclusion.} Fix $x \in B \cap C$. By
\eqref{eq:UtBernoulli}, $F(t \mid x) \in \{0, 1\}$ for $F$-a.e.\
$t$. Under the interval-support assumption, every nonempty
relatively open interval in $\mathrm{supp}(F)$ has positive
$F$-measure (note: this does \emph{not} require $F$ to be
absolutely continuous with respect to Lebesgue measure). Hence
the complement of the $F$-null set on which
$F(\cdot \mid x) \in \{0, 1\}$ fails is dense in
$\mathrm{supp}(F)$. Right-continuity of $F(\cdot \mid x)$ extends
this to $F(t \mid x) \in \{0, 1\}$ at every interior point
$t_0 \in \mathrm{supp}(F)$: take $t_n \downarrow t_0$ inside
$\mathrm{supp}(F)$ with $F(t_n \mid x) \in \{0, 1\}$; then
$F(t_0 \mid x) \in \{0, 1\}$ since $\{0, 1\}$ is closed under the
right limit of a $\{0, 1\}$-valued sequence. At the right endpoint
$b$ of $\mathrm{supp}(F)$ (if finite), $F(b \mid x) = 1$ by
Lemma~\ref{lem:support-containment} applied at $x \in C$. Thus
$F(\cdot \mid x)$ is $\{0, 1\}$-valued on $\mathrm{supp}(F)$.

A non-decreasing right-continuous function taking only the values
$0$ and $1$ on a connected support has exactly one jump from $0$
to $1$, at a point $g(x) \in \mathrm{supp}(F)$. Measurability of
$x \mapsto g(x)$ follows from measurability of
$x \mapsto F(t \mid x)$ for each $t$, via $g(x) = \inf\{t \in \Q :
F(t \mid x) = 1\}$ (using a countable rational grid). Then
$\nu_x = \delta_{g(x)}$, so $F(t \mid X) = \mathbf{1}\{g(X) \le t\}$
almost surely, i.e., $Y = g(X)$ almost surely.
\end{proof}

\subsection{Invariance under monotone transformations}

\begin{proof}[Proof of Theorem 1(iv):
$\xi_{\mathrm{cut}}$ is invariant under strictly monotone
transformations of $Y$]
Let $Y^* = h(Y)$ with $h$ strictly monotone.

\emph{Case 1: $h$ strictly increasing.} Then
$\Pr(Y^* \le s \mid X) = F(h^{-1}(s) \mid X)$ and
$\Pr(Y^* \le s) = F(h^{-1}(s))$. Under the change of variable
$t = h^{-1}(s)$, we have $dF_{Y^*}(s) = dF(t)$, so
\[
2 \int \E\bigl[\phi(F(h^{-1}(s) \mid X), F(h^{-1}(s)))\bigr]\, dF_{Y^*}(s)
= 2 \int \E\bigl[\phi(F(t \mid X), F(t))\bigr]\, dF(t)
= \xi_{\mathrm{cut}}(X, Y).
\]

\emph{Case 2: $h$ strictly decreasing.} Then
$\{h(Y) \le s\} = \{Y \ge h^{-1}(s)\}$, and
$\Pr(Y^* \le s \mid X) = 1 - F(h^{-1}(s)^- \mid X)$, where
$F(t^-)$ denotes the left-limit.

Continuity of the marginal $F$ does not by itself imply
continuity of $F(\cdot \mid X)$ pointwise: conditional
distributions can have atoms even when the marginal is
continuous (for example, if $Y = X$, then $Y \mid X = x$ is a
point mass despite $Y$ having a continuous marginal). What holds
under continuity of $F$ is the following: for each fixed $t$,
\[
\E[\Pr(Y = t \mid X)] = \Pr(Y = t) = 0,
\]
so $\Pr(Y = t \mid X) = 0$ almost surely, hence
\begin{equation}
\label{eq:leftlim-fixed-t}
F(t^- \mid X) = F(t \mid X)\quad\text{almost surely, for each fixed } t.
\end{equation}
The exceptional null set depends on $t$, but this is enough:
substituting \eqref{eq:leftlim-fixed-t} inside the integral over
$t \sim F$ and applying Fubini yields
\[
\int \E[\phi(\Pr(Y^* \le s \mid X), \Pr(Y^* \le s))]\, dF_{Y^*}(s)
= \int \E[\phi(1 - F(t \mid X), 1 - F(t))]\, dF(t).
\]
By Lemma~\ref{lem:phi-props}(ii),
$\phi(1 - u, 1 - v) = \phi(u, v)$, so the integrand is unchanged
under the substitution and $\xi_{\mathrm{cut}}(X, Y^*) =
\xi_{\mathrm{cut}}(X, Y)$.
\end{proof}

\section{Second-order equivalence with \texorpdfstring{$\xi_{AD}$}{xi\_AD}}

The following result formalizes the Taylor-expansion argument used
in Section~3.1 of the main paper.

\begin{slemma}[Taylor expansion of $\phi$]
\label{lem:taylor-phi}
For every $u, v \in (0, 1)$,
\[
\phi(u, v)
= \frac{(u - v)^2}{2\, v(1 - v)}
+ R(u, v),
\]
where the remainder satisfies
\[
|R(u, v)| \le \frac{1}{6\,(\min\{u, v\})^2 \, (\min\{1 - u, 1 - v\})^2}\,
|u - v|^3
\qquad\text{for } u, v \in (0, 1).
\]
For the regularized $\phi_\varepsilon$, the leading quadratic
term is not $(u - v)^2 / (2[v(1-v) + \varepsilon])$ as one might
guess from a naive substitution. Direct computation gives
\[
\frac{\partial^2 \phi_\varepsilon}{\partial u^2}\bigg|_{u = v}
= \frac{1}{v + \varepsilon} + \frac{1}{1 - v + \varepsilon}
= \frac{1 + 2\varepsilon}{(v + \varepsilon)(1 - v + \varepsilon)}
= \frac{1 + 2\varepsilon}{v(1 - v) + \varepsilon + \varepsilon^2},
\]
so the actual quadratic term is
\[
\phi_\varepsilon(u, v)
= \frac{1 + 2\varepsilon}{2 \bigl[v(1-v) + \varepsilon + \varepsilon^2\bigr]}
  (u - v)^2 + O((u - v)^3).
\]
At $\varepsilon = 0$ this recovers $(u-v)^2/(2 v(1-v))$; for
$\varepsilon > 0$ the correct denominator is
$v(1-v) + \varepsilon + \varepsilon^2$ and the numerator gains a
factor of $1 + 2\varepsilon$.
\end{slemma}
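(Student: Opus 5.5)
The plan is a one-variable Taylor expansion of $u\mapsto\phi(u,v)$ about $u=v$ with $v\in(0,1)$ held fixed, using the Lagrange form of the remainder. The only mildly delicate point is bounding the third derivative so that the stated denominator $(\min\{u,v\})^2(\min\{1-u,1-v\})^2$ appears.

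\emph{Derivatives at $u=v$.} First compute, for $u\in(0,1)$,
\[
\partial_u\phi(u,v)=\log\frac{u}{v}-\log\frac{1-u}{1-v},\qquad
\partial_u^2\phi(u,v)=\frac1u+\frac1{1-u}=\frac{1}{u(1-u)},\qquad
\partial_u^3\phi(u,v)=-\frac1{u^2}+\frac1{(1-u)^2}.
\]
At $u=v$ these give $\phi(v,v)=0$, $\partial_u\phi(v,v)=0$ and $\partial_u^2\phi(v,v)=1/(v(1-v))$. Taylor's theorem with Lagrange remainder then yields
\[
\phi(u,v)=\frac{(u-v)^2}{2v(1-v)}+R(u,v),\qquad R(u,v)=\frac{(u-v)^3}{6}\,\partial_u^3\phi(\xi,v)
\]
for some $\xi$ strictly between $u$ and $v$. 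Because $\xi\in(0,1)$ lies on the open segment, no boundary issue arises.

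\emph{Bounding the third derivative.} This is the key step. Over a common denominator, and using $(1-\xi)^2-\xi^2=1-2\xi$,
\[
\partial_u^3\phi(\xi,v)=\frac{\xi^2-(1-\xi)^2}{\xi^2(1-\xi)^2}=\frac{-(1-2\xi)}{\xi^2(1-\xi)^2}.
\]
Since $|1-2\xi|\le1$ on $(0,1)$, this gives $|\partial_u^3\phi(\xi,v)|\le 1/(\xi^2(1-\xi)^2)$. The intermediate point satisfies $\xi\ge\min\{u,v\}$ and $1-\xi\ge\min\{1-u,1-v\}$. Substituting these into the denominator yields exactly the claimed bound on $|R(u,v)|$, including the factor $1/6$.

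\emph{Regularized case.} Repeat the computation for $\phi_\varepsilon$. Differentiating in $u$ gives
\[
\partial_u\phi_\varepsilon=\log\frac{u+\varepsilon}{v+\varepsilon}-\log\frac{1-u+\varepsilon}{1-v+\varepsilon},
\]
which again vanishes at $u=v$. The second derivative is
\[
\partial_u^2\phi_\varepsilon=\frac1{u+\varepsilon}+\frac1{1-u+\varepsilon}.
\]
At $u=v$ the two fractions combine to numerator $1+2\varepsilon$, and the denominator expands as $(v+\varepsilon)(1-v+\varepsilon)=v(1-v)+\varepsilon+\varepsilon^2$. The displayed quadratic term follows.

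For the $O((u-v)^3)$ remainder, the third derivative is
\[
\partial_u^3\phi_\varepsilon=-\frac1{(u+\varepsilon)^2}+\frac1{(1-u+\varepsilon)^2}.
\]
For $\varepsilon>0$ this is bounded by $2/\varepsilon^2$ uniformly on $[0,1]$. The same Lagrange argument then gives a uniform cubic remainder; when $\varepsilon=0$ one instead falls back on the local bound proved above.
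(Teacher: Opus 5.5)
Your proposal is correct and follows the paper's own route: a second-order Taylor expansion of $u\mapsto\phi(u,v)$ about $u=v$ with Lagrange remainder, a third-derivative bound at the intermediate point, and direct differentiation for $\phi_\varepsilon$. You supply the steps the paper only asserts, namely $|1-2\xi|\le 1$, $\xi\ge\min\{u,v\}$ and $1-\xi\ge\min\{1-u,1-v\}$, together with the explicit algebra $(v+\varepsilon)(1-v+\varepsilon)=v(1-v)+\varepsilon+\varepsilon^2$.
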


\begin{proof}
The Taylor expansion is standard. Direct computation gives
$\partial_u \phi(u, v)|_{u = v} = 0$ and
$\partial_u^2 \phi(u, v)|_{u = v} = 1/(v(1-v))$, so
$\phi(u, v) = (u - v)^2/(2 v(1-v)) + R$ with $R$ given by Lagrange
remainder. The third-derivative bound gives the stated bound on $R$.
The regularized case is a direct calculation.
\end{proof}

Substituting the pointwise expansion of Lemma~\ref{lem:taylor-phi}
into the definition of $\xi_{\mathrm{cut}}$ and dividing by 2 gives
the exact integrated identity
\begin{equation}
\label{eq:xicut-xiAD-exact}
\xi_{\mathrm{cut}}(X, Y) - \xi_{AD}(X, Y)
= 2 \int \E_X\!\bigl[R(F(t \mid X), F(t))\bigr]\, dF(t),
\end{equation}
where $R(u, v) = \phi(u, v) - (u - v)^2 / (2 v(1 - v))$ is the
third-order remainder in Lemma~\ref{lem:taylor-phi}. Using the
cubic bound in Lemma~\ref{lem:taylor-phi} on $|R|$, one has the
non-uniform bound
\begin{equation}
\label{eq:xicut-xiAD-bound}
\bigl|\xi_{\mathrm{cut}}(X, Y) - \xi_{AD}(X, Y)\bigr|
\le \frac{1}{3} \int \E_X\!\left[
\frac{|F(t \mid X) - F(t)|^3}{a_t^2\, b_t^2}
\right] dF(t),
\end{equation}
where
$a_t = \min\{F(t \mid X), F(t)\}$ and
$b_t = \min\{1 - F(t \mid X), 1 - F(t)\}$
(with the convention $|R| = 0$ when the denominator vanishes).

The bound is non-uniform because $a_t$ and $b_t$ may approach
zero in the tails of $F$. Consequently, local second-order
equivalence between $\xi_{\mathrm{cut}}$ and $\xi_{AD}$ holds
along sequences of alternatives for which the weighted cubic
remainder in \eqref{eq:xicut-xiAD-bound} is $o$ of the
corresponding quadratic term $\xi_{AD}$. This requires suitable
tail or uniform-integrability conditions on the excess
$F(\cdot \mid X) - F(\cdot)$; pointwise convergence
$F(\cdot \mid X) \to F(\cdot)$ alone is not sufficient.

Under stronger dependence, no uniform ordering between
$\xi_{\mathrm{cut}}$ and $\xi_{AD}$ holds: the pointwise sign of
$R(u, v)$ in Lemma~\ref{lem:taylor-phi} can be either positive or
negative depending on $(u, v)$. For instance,
$\phi(0.3, 0.2) \approx 0.0282$ while
$q(0.3, 0.2) = (0.1)^2 / (2 \cdot 0.2 \cdot 0.8) = 0.0313$, giving
$R(0.3, 0.2) \approx -0.0031 < 0$. The two functionals therefore
agree to second order around independence when the tail conditions
above hold, and can diverge in either direction under stronger
alternatives.

\subsection{Connection to Dette--Siburg--Stoimenov (2013)}

The Dette--Siburg--Stoimenov (DSS) coefficient, in the
normalization used in this paper, is
\[
q_{DSS}(X, Y)
= 6 \int \E_X\!\bigl[(F(t\mid X) - F(t))^2\bigr]\, dF(t),
\]
which is a uniform-weight $L^2$ measure of CDF difference. The
factor of $6$ comes from
$\int_0^1 v(1-v)\, dv = 1/6$: under continuity of $Y$, the
denominator $\int F(t)(1 - F(t))\, dF(t) = 1/6$ is a fixed
constant, so multiplying by $6$ normalises the measure. Under
continuity of $Y$, this $q_{DSS}$ equals the population version of
Chatterjee's rank correlation $\xi_{\mathrm{Ch}}^{\mathrm{pop}}$
exactly \citep{shi2022power}, without any additional constant.
(Some earlier references, including the original DSS 2013 paper,
define the same functional without the factor $6$; the two
conventions differ by a factor and represent the same population
object.)

Comparing $q_{DSS}$ and $\xi_{AD}$ shows that $\xi_{AD}$ is the
Fisher-weighted variant of $q_{DSS}$ obtained by replacing the
uniform weight $1$ with the Bernoulli-information weight
$1/(F(t)(1 - F(t)))$. The composite Bernoulli likelihood-ratio
construction of the main paper identifies this Fisher weight as
a consequence of the likelihood, not an imposed design choice.

\subsection{The regularized coefficient}\label{sec:reg-supp}

For completeness we record properties of the regularized
coefficient introduced in Section~2.3 of the main paper. Define
\[
\xi_{\mathrm{cut}, \varepsilon}(X, Y)
\;=\; 2 \int \E_X\!\bigl[\phi_\varepsilon(F(t \mid X), F(t))\bigr]\, dF(t),
\qquad \varepsilon \ge 0,
\]
where $\phi_\varepsilon$ is defined in the Notation section. By
Lemma~\ref{lem:phi-props}, for every fixed $v \in (0, 1)$ the
map $u \mapsto \phi_\varepsilon(u, v)$ is strictly convex on
$[0, 1]$, so $\phi_\varepsilon(\cdot, v) \ge 0$ with equality iff
$u = v$; and $\phi_\varepsilon$ is bounded on $[0, 1]^2$ by
$2(1 + \varepsilon) \log((1 + \varepsilon)/\varepsilon)$.
Consequently $\xi_{\mathrm{cut}, \varepsilon}$ is well-defined
and nonnegative for every $\varepsilon > 0$ without moment
conditions on $F(\cdot \mid X)$.

For each fixed $u, v \in (0, 1)$, pointwise convergence
$\phi_\varepsilon(u, v) \to \phi(u, v)$ holds as
$\varepsilon \downarrow 0$. Passage of this limit through the
integral $\int \E_X[\phi_\varepsilon(F(t \mid X), F(t))]\, dF(t)$
to obtain $\xi_{\mathrm{cut}, \varepsilon} \to \xi_{\mathrm{cut}}$
requires an additional uniform-integrability or tail-domination
condition on the family
$\{\phi_\varepsilon(F(t \mid X), F(t)) : 0 < \varepsilon \le \varepsilon_0\}$:
the boundedness envelope
$2(1 + \varepsilon)\log((1 + \varepsilon)/\varepsilon)$ blows up
as $\varepsilon \downarrow 0$ and is not an $\varepsilon$-uniform
dominating function, so continuity and interval support alone are
not sufficient to invoke dominated convergence. Sufficient
conditions include any assumption under which the family above
is uniformly integrable with respect to $\Pr_X \otimes F$, or is
dominated by an integrable envelope independent of $\varepsilon$.
We do not require such a condition for the finite-sample
permutation procedure of Proposition~\ref{prop:perm-exact}, which
is exact for $\widehat\xi_{\mathrm{cut}, \varepsilon}$ at any
fixed $\varepsilon \ge 0$ for which the statistic is
well-defined. The upper bound corresponding to Theorem~2.1(i) is
$\xi_{\mathrm{cut}, \varepsilon} \le 2 \int_0^1 M_\varepsilon(v)\, dv$,
which is strictly less than $1$ for $\varepsilon > 0$: at
$\varepsilon = 0.01$, for example, the upper bound is
approximately $0.927$. The regularized coefficient is therefore
normalized to a scale that depends on $\varepsilon$, and for
cross-study comparability we recommend reporting the
unregularized $\xi_{\mathrm{cut}}$ ($\varepsilon = 0$) except when
numerical safeguards require otherwise.

\section{Permutation exactness}

We prove that the permutation calibration described in
Section~4.3 of the main paper is exact in finite samples, even
under the max-over-grid bandwidth rule of Section~4.2.

\begin{sproposition}[Exactness of permutation calibration]
\label{prop:perm-exact}
Assume $(X_1, Y_1), \dots, (X_n, Y_n)$ are i.i.d.\ under the null
hypothesis $H_0: X \indep Y$. Let $T_n(X, Y)$ be any statistic
that is symmetric in the sample pairs. In particular,
$T_n(X, Y) = T_{\mathrm{cut}}$ defined by
$T_{\mathrm{cut}} =
\max_{h \in \mathcal{H}(X)} \widehat\xi_{\mathrm{cut}}(h)$
qualifies, where $\mathcal{H}(X)$ is the multiplicative bandwidth
grid of Section 4.2 of the main paper.

For any permutation $\pi$ of $\{1, \dots, n\}$, define the permuted
statistic $T_n^{(\pi)} = T_n(X, Y \circ \pi)$, where $Y \circ \pi$
denotes the sample $(Y_{\pi(1)}, \dots, Y_{\pi(n)})$. Draw
$B$ permutations $\pi_1, \dots, \pi_B$ uniformly at random. The
permutation $p$-value
\begin{equation}\label{eq:pperm}
\hat p_{\mathrm{perm}}
=
\frac{1 + \#\{b : T_n^{(\pi_b)} \ge T_n\}}{B + 1}
\end{equation}
satisfies $\Pr(\hat p_{\mathrm{perm}} \le \alpha) \le \alpha$ for
every $\alpha \in (0, 1)$, exactly in finite samples.
\end{sproposition}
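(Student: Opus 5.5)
The plan is the standard exchangeability argument for Monte Carlo permutation tests. One point needs care first. "Symmetric in the sample pairs" alone does not give invariance of the null law under permuting $Y$ with $X$ fixed. What the argument actually uses is that, under $H_0$ with i.i.d.\ pairs, the vector $(X, Y\circ\sigma)$ has the same joint law as $(X, Y)$ for every fixed permutation $\sigma$. This holds because $X$ and $Y$ are independent vectors and $Y$ has i.i.d.\ coordinates, so its law is invariant under coordinate permutations. I will therefore prove the result for an arbitrary measurable statistic $T_n(X,Y)$; symmetry is not needed. $T_{\mathrm{cut}}$ is covered because $\mathcal H(X)$ depends only on $X$, so $T_{\mathrm{cut}}$ is a measurable function of $(X,Y)$ to which the same recipe applies for every permuted $Y$.

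\textbf{Step 1: exchangeability of the statistics.} Set $\pi_0=\mathrm{id}$ and let $\pi_1,\dots,\pi_B$ be i.i.d.\ uniform on $S_n$, independent of the data. Define $T^{(b)} = T_n(X, Y\circ\pi_b)$ for $b=0,\dots,B$, so that $T^{(0)}=T_n$. Draw an independent uniform $\sigma\in S_n$ and put $Y' = Y\circ\sigma$. Then $(X,Y')$ has the same law as $(X,Y)$, and $\pi_b' := \sigma^{-1}\circ\pi_b$ (with composition ordered so that $Y\circ\pi_b = Y'\circ\pi'_b$) turns $(\pi_0,\dots,\pi_B)$ into $(\sigma^{-1},\pi_1',\dots,\pi_B')$. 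These are $B+1$ i.i.d.\ uniform permutations independent of $(X,Y')$. Hence $(T^{(0)},\dots,T^{(B)})$ has the same law as $(T_n(X,Y'\circ\tau_b))_{b=0}^B$ with $\tau_0,\dots,\tau_B$ i.i.d.\ uniform. That family is exchangeable: conditionally on $(X,Y')$ it is i.i.d. This is the step I expect to be the main obstacle, because the bookkeeping of left versus right composition and the role of $\sigma$ must be handled carefully. It is cleanest to condition on the data and use that the uniform measure on $S_n$ is invariant under group translation.

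\textbf{Step 2: rank bound.} Let $R = \#\{b\in\{0,\dots,B\}: T^{(b)}\ge T^{(0)}\}$, so that $\hat p_{\mathrm{perm}} = R/(B+1)$. For exchangeable $T^{(0)},\dots,T^{(B)}$, a standard counting argument gives $\Pr(R\le k)\le k/(B+1)$ for every integer $k$, and this holds even with ties. To see this, note that $\sum_{j}\mathbf 1\{R_j\le k\}\le k$ deterministically, where $R_j$ counts the indices $b$ with $T^{(b)}\ge T^{(j)}$. Indeed, if $R_j\le k$ then $T^{(j)}$ is among values exceeded or matched by at most $k$ entries, and at most $k$ indices can have this property. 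Taking expectations and using exchangeability, $\Pr(R_0\le k) = \Pr(R_j\le k)$ for every $j$, which yields the bound.

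\textbf{Step 3: conclusion.} Set $k=\lfloor\alpha(B+1)\rfloor$. Then $\Pr(\hat p_{\mathrm{perm}}\le\alpha)=\Pr(R\le k)\le k/(B+1)\le\alpha$, which proves the claim exactly in finite samples.
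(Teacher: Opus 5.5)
Your proposal is correct and follows essentially the same route as the paper: invariance of the null law under $(X,Y)\mapsto(X,Y\circ\pi)$, then exchangeability of the $B+1$ recomputed statistics, then a rank bound. The differences are in the details: you prove the exchangeability step directly via the auxiliary uniform $\sigma$ where the paper cites Lehmann--Romano, and you handle ties with a deterministic counting bound where the paper uses randomized tie-breaking plus a conservativeness remark; your observation that symmetry of $T_n$ in the pairs plays no role is also right, since the paper's proof never uses it either and only needs the same function $T_n$ to be evaluated on each $(X,Y\circ\pi_b)$.
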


\begin{proof}
Under $H_0$, the pairs are i.i.d.\ with $X \indep Y$. Consider
the group $S_n$ of permutations of $\{1, \dots, n\}$ acting on
the $Y$-coordinate: for $\pi \in S_n$, define the map
$g_\pi : (X, Y) \mapsto (X, Y \circ \pi)$. Under $H_0$, the joint
distribution of the sample is invariant under this group action:
for every $\pi \in S_n$,
\[
\bigl((X_1, Y_1), \dots, (X_n, Y_n)\bigr)
\stackrel{d}{=}
\bigl((X_1, Y_{\pi(1)}), \dots, (X_n, Y_{\pi(n)})\bigr),
\]
because $X$ and $Y$ are independent and the $Y_i$ are exchangeable.

Now let $\Pi_1, \dots, \Pi_B$ be i.i.d.\ uniform on $S_n$ and
independent of the sample. By the group-invariance argument
\citep[Chapter~15]{lehmann2006testing}, the joint vector
\[
(T_n, T_n^{(\Pi_1)}, \dots, T_n^{(\Pi_B)})
\]
is exchangeable under $H_0$: any coordinate is a legitimate
$T$-value on a uniformly random permutation of the sample, and
the joint law is invariant under permutations of the vector.
(Marginal equality in distribution alone would not imply this;
the joint exchangeability requires the group action to be
compatible with the sample distribution, which is what $H_0$
provides.)

For the max-over-grid bandwidth rule specifically, the statistic
$T_n(X, Y) = \max_{h \in \mathcal{H}(X)}
\widehat\xi_{\mathrm{cut}}(h)$ satisfies the compatibility
requirement: because the grid $\mathcal{H}(X)$ depends only on
$X$ (through the Silverman-type base bandwidth $h_0$
(equation (16) of the main paper)), permuting $Y$ leaves
$\mathcal{H}(X)$ unchanged, and the recomputation
$T_n^{(\pi)}$ evaluates the same functional on the permuted
sample.

By exchangeability of $(T_n, T_n^{(\Pi_1)}, \dots, T_n^{(\Pi_B)})$
under $H_0$, the randomized rank of $T_n$ (with uniform tie-breaking)
is uniform on $\{1, \dots, B+1\}$. The permutation $p$-value
\eqref{eq:pperm} uses the non-randomized convention
$\#\{b : T_n^{(\Pi_b)} \ge T_n\}$, under which ties between $T_n$
and the permutation replicates can only make the $p$-value larger
than its randomized counterpart and hence make the test
conservative. Therefore
$\Pr(\hat p_{\mathrm{perm}} \le \alpha) \le \alpha$ for every
$\alpha \in (0, 1)$.
\end{proof}

\begin{sremark}[Scope of the exactness result]
Proposition~\ref{prop:perm-exact} establishes exactness of the
permutation $p$-value under the unconditional null
$H_0 : X \indep Y$. The result applies to
$T_n = T_{\mathrm{cut}}$, to $T_n = \widehat\xi_{AD}^*$
(with the same $X$-only bandwidth grid), and more generally to
any statistic recomputed on the permuted sample using an
$X$-only-dependent bandwidth rule.

The result does \emph{not} extend without modification to the
conditional null $H_0 : X \indep Y \mid Z$. Ordinary permutation
of $Y$ destroys the $Y$--$Z$ association and is not generally
valid for the conditional null: the marginal law of $Y$ is
preserved by permutation, but the joint law $(Y, Z)$ is not.
Testing $H_0 : X \indep Y \mid Z$ requires an appropriate
conditional-randomization scheme---for example, the conditional
randomization test of \citet{candes2018panning}, a stratified
permutation when $Z$ has finite support, or an analogous
CRT-type resampling scheme. The conditional variant
$\widehat\xi_{\mathrm{cut}}^{*\mid Z}$ introduced in Section~4.5
of the main paper is estimated by kernel smoothing over $Z$, but
its calibration should use a conditional-permutation or CRT-type
resampling scheme rather than the ordinary permutation of
Proposition~\ref{prop:perm-exact}. Full development of the
conditional testing framework is deferred to future work.
\end{sremark}

\section{Information-theoretic interpretation}\label{sec:supp-info}

This section formalizes the information-theoretic interpretation
of $\xi_{\mathrm{cut}}$ introduced in Section~2.4 of the main
paper.

\subsection{Identity as threshold-averaged mutual information}

Fix $t \in \R$ and let $B_t = \mathbf{1}(Y \le t)$. The
conditional distribution of $B_t$ given $X$ is
$\mathrm{Bernoulli}(F(t \mid X))$, and its marginal distribution
is $\mathrm{Bernoulli}(F(t))$. The mutual information
between $X$ and $B_t$ is
\begin{equation}
\label{eq:IXBt}
I(X;\, B_t)
\;=\; \E_X\!\bigl[
D_{KL}(\Pr(B_t \mid X) \,\|\, \Pr(B_t))\bigr]
\;=\; \E_X[\phi(F(t \mid X), F(t))],
\end{equation}
where the second equality uses the fact that
$\phi(u, v) = D_{KL}(\mathrm{Bernoulli}(u) \,\|\,
\mathrm{Bernoulli}(v))$. Substituting into the definition of
$\xi_{\mathrm{cut}}$ gives the identity
\begin{equation}
\label{eq:xicut-KL-supp}
\xi_{\mathrm{cut}}(X, Y)
\;=\; 2 \int I(X;\, B_t)\, dF(t),
\end{equation}
expressing $\xi_{\mathrm{cut}}$ as twice the threshold-averaged
mutual information between $X$ and $B_t$.

\subsection{Data-processing bound}

\begin{sproposition}[Data-processing bound]
\label{prop:DPI}
For every $(X, Y)$ with $Y$ continuous,
$\xi_{\mathrm{cut}}(X, Y) \le 2\, I(X; Y)$.
\end{sproposition}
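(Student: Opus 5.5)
The plan is to combine the threshold-averaged mutual information identity \eqref{eq:xicut-KL-supp} with the data-processing inequality applied threshold by threshold. By \eqref{eq:xicut-KL-supp},
\[
\xi_{\mathrm{cut}}(X,Y) = 2\int I(X;B_t)\,dF(t), \qquad B_t=\mathbf{1}(Y\le t).
\]
It therefore suffices to show $I(X;B_t)\le I(X;Y)$ for every $t$. Integrating against the probability measure $dF$, which has total mass one, then gives $\int I(X;B_t)\,dF(t)\le I(X;Y)$, and multiplying by $2$ yields the claim. If $I(X;Y)=+\infty$ (for instance under functional dependence with continuous $Y$), the inequality is trivial, since $\xi_{\mathrm{cut}}\le 1$ by Theorem~\ref{thm:properties}(i). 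So I will assume $I(X;Y)<\infty$.

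For the pointwise step, note that $B_t$ is a deterministic measurable function of $Y$, so $X\to Y\to B_t$ is a Markov chain. The data-processing inequality for mutual information then gives $I(X;B_t)\le I(X;Y)$.

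To keep this self-contained for general (possibly non-density) $(X,Y)$, I will use the definition $I(X;Y)=D_{KL}(P_{XY}\,\|\,P_X\otimes P_Y)$ together with the fact that KL divergence does not increase under a measurable map. Apply the map $(x,y)\mapsto(x,\mathbf{1}(y\le t))$ to both $P_{XY}$ and $P_X\otimes P_Y$. The pushforwards are $P_{X B_t}$ and $P_X\otimes P_{B_t}$, so
\[
D_{KL}(P_{XB_t}\,\|\,P_X\otimes P_{B_t})\le D_{KL}(P_{XY}\,\|\,P_X\otimes P_Y).
\]
The left side equals $\E_X[\phi(F(t\mid X),F(t))]$ by \eqref{eq:IXBt}. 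This uses the chain rule for KL divergence, with the disintegration over $X$ being the regular conditional law of $B_t$, namely $\mathrm{Bernoulli}(F(t\mid X))$.

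The only delicate point is measurability of the map $t\mapsto I(X;B_t)$, so that the integral in \eqref{eq:xicut-KL-supp} makes sense. This is inherited from joint measurability of $(t,x)\mapsto F(t\mid x)$, which is already implicit in the definition of $\xi_{\mathrm{cut}}$, together with Tonelli's theorem for the nonnegative integrand. Continuity of $Y$ is used only through the setting of \eqref{eq:xicut-KL-supp}; the pointwise bound itself needs no continuity.

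I do not expect a real obstacle. The step needing most care is justifying the measure-theoretic data-processing inequality for arbitrary $X\in\R^d$ without densities. I would cite the monotonicity of $f$-divergences under Markov kernels (equivalently, the Donsker--Varadhan or partition characterization of KL as a supremum over finite partitions), rather than proving it from scratch.
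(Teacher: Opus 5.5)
Your proof is correct and takes the same route as the paper: you combine the identity $\xi_{\mathrm{cut}}(X,Y) = 2\int I(X;B_t)\,dF(t)$ with the data-processing inequality $I(X;B_t)\le I(X;Y)$ for the Markov chain $X\to Y\to B_t$, then integrate against the probability measure $dF$. Your extra details (the pushforward form of the KL inequality, the $I(X;Y)=+\infty$ case, and measurability of $t\mapsto I(X;B_t)$) go beyond what the paper spells out but do not change the argument.
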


\begin{proof}
Because $B_t = \mathbf{1}(Y \le t)$ is a deterministic function of
$Y$, the Markov chain $X \to Y \to B_t$ holds, and the data-
processing inequality gives $I(X; B_t) \le I(X; Y)$ for every $t$.
Integrating over $t \sim F$ and using \eqref{eq:xicut-KL-supp},
\[
\xi_{\mathrm{cut}}(X, Y)
\;=\; 2 \int I(X;\, B_t)\, dF(t)
\;\le\; 2 \int I(X; Y)\, dF(t)
\;=\; 2\, I(X; Y),
\]
since $F$ is a probability measure.
\end{proof}

For each fixed threshold $t$, the binary variable $B_t$ generally
retains only part of the information about $Y$ that is relevant
to $X$, so $I(X; B_t) \le I(X; Y)$, typically strictly. Although
the full collection $\{B_t : t \in \R\}$ determines $Y$ (via
$Y = \inf\{t : B_t = 1\}$) and hence carries the same information
about $X$ as $Y$ itself, $\xi_{\mathrm{cut}}$ averages the mutual
information of individual binary cuts rather than taking the
mutual information of the entire collection. The DPI bound is
therefore typically strict, reflecting the gap between averaging
per-threshold mutual informations and jointly using all
thresholds.

\subsection{Numerical illustration on Gaussian data}

Table~\ref{tab:info-gaussian} illustrates the bound on bivariate
standard Gaussian data with correlation $\rho$, where
$I(X; Y) = -\tfrac{1}{2} \log(1 - \rho^2)$ has a closed form.

\begin{table}[h]
\centering
\caption{Comparison of $\xi_{\mathrm{cut}}$ (computed by
numerical double integration of the population functional) with
$2 I(X; Y)$ for bivariate standard normal $(X, Y)$ with
correlation $\rho$. Both quantities are in nats.}
\label{tab:info-gaussian}
\begin{tabular}{cccc}
\toprule
$\rho$ & $\xi_{\mathrm{cut}}$ & $2 I(X; Y)$ & Ratio \\
\midrule
0.1 & 0.0048 & 0.0101 & 0.48 \\
0.3 & 0.0444 & 0.0943 & 0.47 \\
0.5 & 0.1297 & 0.2877 & 0.45 \\
0.7 & 0.2790 & 0.6733 & 0.41 \\
0.9 & 0.5570 & 1.6607 & 0.34 \\
\bottomrule
\end{tabular}
\end{table}

The bound holds strictly across the range; the ratio
$\xi_{\mathrm{cut}}/(2 I(X;Y))$ decreases from approximately
$0.48$ at weak dependence to $0.34$ at strong dependence,
reflecting the fraction of the total $X$--$Y$ mutual information
captured on average by a single random threshold.

\subsection{Practical implications}

The identity \eqref{eq:xicut-KL-supp} and
Proposition~\ref{prop:DPI} give $\xi_{\mathrm{cut}}$ a
distribution-free interpretation on the information scale:
the population coefficient is calibrated in nats (or bits, dividing by
$\log 2$), and its value serves as a distribution-free lower
bound on twice the mutual information $I(X; Y)$. This
interpretation is not available for rank-based coefficients such
as Chatterjee's $\xi$, whose values are on a scale-free
$[0, 1]$ interval without a direct information-theoretic reading.
For a population value $\xi_{\mathrm{cut}} = 0.14$, for instance,
the identity guarantees $I(X; Y) \ge 0.07$ nats. Applying this
identity to the sample estimator $T_{\mathrm{cut}}$
requires bias correction: because the max-over-bandwidth rule
introduces an upward bias of order $0.05$--$0.07$ at $n = 200$
even under independence (see Section~4.2 of the main paper), a
plug-in interpretation of $T_{\mathrm{cut}}/2$ as a
mutual-information floor overstates the true bound in absolute
terms. The permutation-calibrated $p$-value remains valid for
testing independence, since the same bias enters symmetrically
under both null and observed samples.

\section{Interpretations under semiparametric models}\label{sec:supp-semi}

Section~3 of the main paper places $\xi_{\mathrm{cut}}$ within the
Dette--Siburg--Stoimenov family of $L^2$ CDF-difference measures.
This section develops the connection to three common
semiparametric models for $F(y \mid x)$, showing that under each
$\xi_{\mathrm{cut}}$ reduces to a familiar quantity in the
weak-signal regime while remaining valid without the model.

\subsection{Location-shift model}\label{sec:supp-semi-shift}

Let $Y = \mu(X) + \varepsilon$ with $\varepsilon \indep X$
and continuous CDF $F_\varepsilon$ with bounded density
$f_\varepsilon$. For the Taylor argument below we assume
$f_\varepsilon$ is continuously differentiable with
bounded derivative (this covers Gaussian, Student-$t$, logistic
and Cauchy noise); the Laplace case, where $f_\varepsilon$ has a
kink at $0$, is handled separately by direct calculation.
Without loss of generality assume $\E[\mu(X)] = 0$ (any nonzero
mean can be absorbed into an intercept and does not affect
dependence). To formulate the local expansion rigorously we
adopt a shrinking-signal parameterization: fix a bounded,
centered function $m(\cdot)$ with $\E[m(X)] = 0$ and
$\mathrm{Var}(m(X)) < \infty$, and consider
\[
\mu_\delta(X) \;=\; \delta\, m(X),
\qquad \delta \to 0.
\]
Under this parameterization,
\[
F(y \mid X) \;=\; F_\varepsilon(y - \delta m(X)),
\qquad
F(y) \;=\; \E_X[F_\varepsilon(y - \delta m(X))],
\]
and expanding in $\delta$ (uniformly in $y$, using boundedness
of $m$ and $f_\varepsilon'$),
\[
F_\varepsilon(y - \delta m(X))
= F_\varepsilon(y) - \delta f_\varepsilon(y)\, m(X)
  + O(\delta^2),
\]
where the $O(\delta^2)$ term is bounded uniformly in $y$ by
$\tfrac{1}{2}\delta^2 \|f_\varepsilon'\|_\infty m(X)^2$. Taking
expectation gives $F(y) = F_\varepsilon(y) + O(\delta^2)$
uniformly in $y$, so the leading difference is
\[
F(y \mid X) - F(y)
\;=\; -\delta f_\varepsilon(y)\,m(X) + O(\delta^2).
\]
Squaring gives
$(F(y \mid X) - F(y))^2 = \delta^2 f_\varepsilon(y)^2 m(X)^2
+ O(\delta^3)$ uniformly in $y$, where the constant absorbed
into the $O(\delta^3)$ term depends on $\|f_\varepsilon'\|_\infty$
and $\E[m(X)^2]$. Substituting this into the second-order
Taylor expansion of the Bernoulli KL from Section~3.1 of the
main paper, and letting $F_\delta$ denote the marginal CDF of
$Y$ under the shrinking-signal model, gives to leading order
\[
\xi_{\mathrm{cut}}
\;=\; \delta^2 \E[m(X)^2] \int
\frac{f_\varepsilon(y)^2}{F_\delta(y)(1 - F_\delta(y))}\, dF_\delta(y)
+ o(\delta^2).
\]
Because $F_\delta \to F_\varepsilon$ uniformly as $\delta \to 0$,
the tail or uniform-integrability condition on the excess
$F(\cdot \mid X) - F(\cdot)$ established in Section~B (which
underlies the second-order equivalence between $\xi_{\mathrm{cut}}$
and $\xi_{AD}$) allows replacement of $F_\delta$ by
$F_\varepsilon$ in the integrand at the cost of $o(\delta^2)$.
We obtain
\begin{equation}
\label{eq:xicut-shift}
\xi_{\mathrm{cut}}
\;=\; \delta^2 C(F_\varepsilon) \cdot \mathrm{Var}\{m(X)\}
      + o(\delta^2),
\qquad
C(F_\varepsilon) := \int \frac{f_\varepsilon(t)^3}{F_\varepsilon(t)(1 - F_\varepsilon(t))}\, dt,
\end{equation}
under the integrability of the Fisher-weighted density integral
above. Only the leading first-order term of the density Taylor
expansion enters $C(F_\varepsilon)$; higher-order density terms
contribute at $O(\delta^3)$ and disappear into the $o(\delta^2)$
remainder. Equivalently, $\xi_{\mathrm{cut}} = C(F_\varepsilon)
\mathrm{Var}(\mu_\delta(X)) + o(\mathrm{Var}(\mu_\delta(X)))$.

The constant $C(F_\varepsilon)$ depends only on the noise
distribution. Two closed-form or nearly closed-form cases:
\begin{itemize}[leftmargin=1.5em, itemsep=0.25em]
\item Standard Gaussian $\varepsilon$: numerical integration gives
$C_{\mathrm{Gaussian}} \approx 0.4805$.
\item Standard Laplace $\varepsilon$ (density
$f(x) = \tfrac{1}{2} e^{-|x|}$): even though $f_\varepsilon$ has
a kink at $0$ and so falls outside the smoothness assumption above,
the population functional
$\int f_\varepsilon(t)^3 / [F_\varepsilon(t)(1 - F_\varepsilon(t))]\, dt$
is finite and can be evaluated directly. The change of variable
$u = e^{-|t|}$ yields the closed form
$C_{\mathrm{Laplace}} = \int_0^1 \tfrac{u}{2 - u}\, du = 2\log 2 - 1
\approx 0.3863$. A rigorous Taylor argument for the Laplace case
requires a separate treatment of the kink (splitting the integral
at $0$ or working with a smoothed approximation and passing to the
limit); we do not give one here.
\end{itemize}
Under the linear model $m(X) = X$ with Gaussian $X$ and
Gaussian $\varepsilon$, \eqref{eq:xicut-shift} becomes
$\xi_{\mathrm{cut}} \approx 0.481 \cdot \delta^2 \mathrm{Var}(X)$,
recovering the Pearson-$r^2$ scaling up to the noise-shape
constant $C(F_\varepsilon)$. Numerical verification at
$\delta \in \{0.1, 0.2, 0.3\}$ gives
$\xi_{\mathrm{cut}}/\delta^2 \in \{0.477, 0.467, 0.452\}$,
tending to $C_{\mathrm{Gaussian}} \approx 0.481$ as $\delta \downarrow 0$.

The reading is: under a location-shift model with Gaussian noise,
$\xi_{\mathrm{cut}}$ is a nonparametric analog of Pearson's
$r^2$, calibrated on the same $[0, 1]$ scale. For non-Gaussian
noise the same scaling holds with a different constant that
depends only on $F_\varepsilon$.

\emph{Relation to Spearman's rank correlation.} Under the
location-shift model $Y = X\beta + \varepsilon$ with
$\varepsilon \indep X$, Spearman's rank correlation provides
another rank-based measure whose squared value is of order
$\beta^2$ under a shrinking-signal regime. For scalar $X$, a
first-order expansion of $F_Y(Y) = \E_{X'}[F_\varepsilon\{\varepsilon
+ (X - X')\beta\}]$ around $\beta = 0$ gives
$F_Y(Y) = F_\varepsilon(\varepsilon) + \beta f_\varepsilon(\varepsilon)
\{X - \E(X)\} + o(\beta)$, and since $F_\varepsilon(\varepsilon)
\sim \mathrm{Unif}(0,1)$ is independent of $X$,
\begin{equation}
\label{eq:spearman-shift}
\rho_S \;=\; 12\beta\, \mathrm{Cov}\{F_X(X),\, X\}
\int f_\varepsilon(t)^2\, dt + o(\beta),
\end{equation}
using $\E\{f_\varepsilon(\varepsilon)\} = \int f_\varepsilon(t)^2\, dt$.
Consequently
$\rho_S^2 = 144\beta^2\, \mathrm{Cov}\{F_X(X), X\}^2 \bigl(\int f_\varepsilon^2\bigr)^2 + o(\beta^2)$,
which is $O(\beta^2)$ like $\xi_{\mathrm{cut}}$. The two
quantities, however, capture different aspects of the local
association. The squared Spearman correlation depends on the
squared directional alignment $\mathrm{Cov}\{F_X(X), X\}^2$,
whereas $\xi_{\mathrm{cut}}$ depends on the quadratic magnitude
$\mathrm{Var}(X)$ of the location perturbation (assuming $X$ is centered)
and a Fisher-weighted functional of the error distribution. Thus,
although both are second-order sensitive to local location
shifts, $\xi_{\mathrm{cut}}$ should not be interpreted as a
Spearman correlation or its direct transformation.

\subsection{Location-scale (heteroscedastic) model}\label{sec:supp-semi-scale}

Let $Y = \mu(X) + \sigma(X)\varepsilon$ with $\varepsilon \indep X$
and $\sigma(x) > 0$. Then
\[
F(y \mid x) \;=\; F_\varepsilon\!\left(\frac{y - \mu(x)}{\sigma(x)}\right),
\]
which changes with $x$ whenever $\mu$ or $\sigma$ does. Consider
the pure heteroscedastic case $\mu(X) \equiv 0$,
$\sigma(X) = \exp(\gamma X)$, with $\varepsilon$ symmetric around
$0$ and $\E(\varepsilon^2) < \infty$. Then $\E(Y \mid X) = 0$,
so $\mathrm{Cov}(X, Y) = 0$ and Pearson's correlation vanishes
exactly. Rank-based measures that do not use the conditional CDF
(e.g.\ Spearman, Kendall) are also zero.

\emph{Why Pearson misses this dependence.} Pearson's correlation
is a functional of the joint distribution that depends only on
the first conditional moment: from
$\mathrm{Cov}(X, Y) = \E[X \cdot \E(Y \mid X)] - \E(X)\E(Y)$,
whenever $\E(Y \mid X)$ is constant in $X$, Pearson's coefficient
is zero regardless of any higher-order dependence structure.
Under the pure heteroscedastic model, $\E(Y \mid X) = \sigma(X)
\E(\varepsilon) = 0$ by symmetry, so Pearson is blind to the
scale variation by construction. Spearman and Kendall inherit
the same limitation: after rank transformation, symmetry of the
conditional distribution around its median still yields zero
directional alignment.

\emph{Why $\xi_{\mathrm{cut}}$ detects it.} $\xi_{\mathrm{cut}}$
is a functional of the full conditional CDF, not of any single
conditional moment. The threshold-wise integrand
$\phi(F(t \mid X), F(t))$ is strictly positive whenever
$F(t \mid X) \ne F(t)$, and under $\sigma(X) = \exp(\gamma X)$
the map $t \mapsto F(t \mid X = x)$ is a rescaling of
$t \mapsto F_\varepsilon(t)$ by $\exp(-\gamma x)$: heavier tails
for large $\sigma(X)$, sharper concentration for small $\sigma(X)$.
Consequently $F(t \mid X) - F(t) \ne 0$ at every $t$ except the
common median, and $\E[\phi(F(t \mid X), F(t))] > 0$ at every
non-median $t$. Integrating over $t \sim F$ gives
$\xi_{\mathrm{cut}} > 0$. The same argument shows
$\xi_{\mathrm{Ch}}^{\mathrm{pop}} = q_{DSS} > 0$: both members of
the DSS family detect the scale-driven CDF perturbation that
Pearson's mean-based construction cannot see.

Both $\xi_{\mathrm{Ch}}$ and $\xi_{\mathrm{cut}}$ remain
positive, however, because the conditional CDF $F(y \mid x)$
genuinely depends on $x$. From Section~3.2, both population
functionals are of the form
$\int \E[(F(t \mid X) - F(t))^2 W(t)]\, dF(t)$; the distinction
is the weight $W(t)$: $q_{DSS} = \xi_{\mathrm{Ch}}^{\mathrm{pop}}$
uses a uniform weight (up to the multiplier $6$), whereas the
second-order form of $\xi_{\mathrm{cut}}$ uses the Fisher weight
$1/[F(t)(1 - F(t))]$. The Fisher weight up-weights threshold
locations where $F(t) \approx 0$ or $F(t) \approx 1$, i.e.\ the
tails of the marginal distribution of $Y$, which can emphasize
tail departures of $F(t \mid X)$ from $F(t)$ produced by scale
changes. This is a plausible explanation for the empirical
finding in the S6 heteroscedastic scenario of the main paper's
simulation (Section~5), where $\xi_{\mathrm{cut}}$ achieves
3-4$\times$ the power of the empirical $\widehat\xi_{\mathrm{Ch}}$
despite both population coefficients being positive.

Monte Carlo evaluation of the population functional (Gaussian
$\varepsilon$, $X \sim N(0,1)$) confirms sensitivity: 
$\xi_{\mathrm{cut}}(\gamma) \approx \{0.011, 0.048, 0.108\}$
at $\gamma = \{0.2, 0.5, 1.0\}$, while Pearson's $r$ remains
within Monte Carlo noise of zero.

\subsection{Cox proportional hazards model}\label{sec:supp-semi-cox}

Let $Y$ have conditional hazard
$\lambda(t \mid x) = \lambda_0(t) e^{\beta x}$, so
\[
F(t \mid x) \;=\; 1 - S_0(t)^{\exp(\beta x)},
\qquad
S_0(t) = \exp\!\left(-\int_0^t \lambda_0(s)\, ds\right).
\]
Without loss of generality assume $\E(X) = 0$; otherwise replace
$x$ by $x - \E(X)$, absorbing the shift into the baseline hazard.
Expand $S_0(t)^{\exp(\beta x)} = \exp(e^{\beta x}\log S_0(t))$
around $\beta = 0$:
\[
S_0(t)^{\exp(\beta x)}
= S_0(t)\bigl[1 + \beta x \log S_0(t) + O(\beta^2)\bigr].
\]
Under $\E(X) = 0$, taking expectation gives
$F(t) = 1 - S_0(t) + O(\beta^2) = F_0(t) + O(\beta^2)$, so
\[
F(t \mid X) - F(t)
\;=\; -\beta X\, S_0(t)\log S_0(t) + O(\beta^2).
\]
Substituting into the second-order Taylor expansion of
$\xi_{\mathrm{cut}}$ yields
\[
\xi_{\mathrm{cut}}
\;=\; \beta^2 \mathrm{Var}(X) \cdot
\int \frac{[S_0(t) \log S_0(t)]^2}{F_0(t)(1 - F_0(t))}\, dF_0(t)
+ o(\beta^2).
\]
Now the key observation: change variables $v = F_0(t)$ so
$S_0(t) = 1 - v$ and $dF_0(t) = dv$:
\begin{align*}
\int \frac{[S_0(t) \log S_0(t)]^2}{F_0(t)(1-F_0(t))} dF_0(t)
&= \int_0^1 \frac{(1-v)^2 \log^2(1-v)}{v(1-v)}\, dv \\
&= \int_0^1 \frac{(1-v)\log^2(1-v)}{v}\, dv
\;=\; 2\{\zeta(3) - 1\},
\end{align*}
where the last equality follows from the substitution $u = 1-v$
and the series expansion $u/(1-u) = \sum_{k \ge 1} u^k$ combined
with $\int_0^1 u^k \log^2 u\, du = 2/(k+1)^3$. The baseline
distribution $F_0$ has cancelled entirely. We therefore obtain
the universal expansion
\begin{equation}
\label{eq:xicut-cox}
\boxed{\;
\xi_{\mathrm{cut}}
\;=\; 2\{\zeta(3) - 1\}\, \beta^2 \mathrm{Var}(X) + o(\beta^2),
\qquad 2\{\zeta(3) - 1\} \approx 0.4041,
\;}
\end{equation}
valid for any continuous baseline hazard and any centered
covariate distribution with sufficient local exponential-moment
regularity, for example $\E[e^{\eta |X|}] < \infty$ for some
$\eta > 0$ (which controls the remainder in the expansion of
$e^{\beta X}$ uniformly in a neighborhood of $\beta = 0$).
The condition is trivially satisfied for bounded $X$ and for
Gaussian, sub-Gaussian, or sub-exponential covariates; it
excludes heavy-tailed $X$ such as Cauchy. Monte Carlo
verification with Weibull baselines of shape $k \in \{0.5, 1, 2\}$
and $\beta \in \{0.05, 0.1\}$ gives ratios
$\xi_{\mathrm{cut}}/\beta^2$ clustering around $0.40$
across all baselines, consistent with \eqref{eq:xicut-cox}.

The universality of the constant $2\{\zeta(3) - 1\}$ is a genuine
strength: under the Cox model, the local behavior of
$\xi_{\mathrm{cut}}$ depends only on $\beta$ and $\mathrm{Var}(X)$,
not on any nuisance shape of $\lambda_0$.

\emph{Structural connection to log-rank aggregation.} For each
threshold $t$, the Bernoulli KL contribution
$\E[\phi(F(t \mid X), F(t))]$ is the population log-likelihood
ratio comparing the conditional probability
$\Pr(Y \le t \mid X)$ to the marginal probability $\Pr(Y \le t)$,
averaged over $X$. This is analogous, but not literally identical,
to the stratified $2 \times 2$ contributions aggregated by the
log-rank statistic in survival analysis (where the strata are
formed by discrete event times and the tables classify observed
event indicators). For continuous $X$, the contribution at each
$t$ is a mixture over conditional-probability comparisons rather
than a single $2 \times 2$ table. Aggregating these contributions
across $t \sim F$ under the Cox model gives $\xi_{\mathrm{cut}}$
its interpretation as an integrated threshold-level departure
from independence.

\subsection{Summary}

Under three common semiparametric families:
\begin{itemize}[leftmargin=1.5em, itemsep=0.25em]
\item \emph{Location-shift}: $\xi_{\mathrm{cut}}$ is a
nonparametric analog of Pearson's $R^2$, with the proportionality
constant depending on the noise distribution shape.
\item \emph{Location-scale}: pure heteroscedasticity gives
Pearson $r = 0$ exactly, while $\xi_{\mathrm{cut}}$ and
Chatterjee's $\xi_{\mathrm{Ch}}$ both detect it at the population
level; the Fisher weighting in $\xi_{\mathrm{cut}}$ can emphasize
tail thresholds affected by scale changes, providing a plausible
explanation for the finite-sample power gap in favor of
$\xi_{\mathrm{cut}}$.
\item \emph{Cox proportional hazards}: $\xi_{\mathrm{cut}}$ admits
the universal expansion
$\xi_{\mathrm{cut}} = 2\{\zeta(3) - 1\} \beta^2 \mathrm{Var}(X)
+ o(\beta^2)$, with a constant $\approx 0.404$ that does not
depend on the baseline hazard.
\end{itemize}
All three statements are local (small dependence) approximations;
the exact identities do not simplify under strong dependence.
The identities are useful for interpreting $\xi_{\mathrm{cut}}$
values on real data where a semiparametric model is a reasonable
working assumption, without requiring the model for validity.

\section{Additional simulation results}

\subsection{Empirical size and power at additional sample sizes}

Table~\ref{tab:sim-n100} reports the analogue of the main paper's
Table~1 at $n = 100$; Table~\ref{tab:sim-n500} reports the analogue
at $n = 500$. Both use $M = 1000$ Monte Carlo replications with
$B = 999$ permutations at $\lambda = 0.5$.

\begin{table}[h]
\centering
\small
\caption{Empirical power at nominal level $\alpha = 0.05$ at
$n = 100$, $\lambda = 0.5$. Compare with Table~1 of the main
paper ($n = 200$).}
\label{tab:sim-n100}
\begin{tabular}{lccccc}
\toprule
Scenario
  & $\xi_{\text{Ch}}$
  & $\nu_{\text{AR}}$
  & $\xi_{\text{Ch,boost}}^{M=50}$
  & $\widehat\xi_{AD}^*$
  & $T_{\mathrm{cut}}$ \\
\midrule
S1 Null                    & 0.052 & 0.056 & 0.032 & 0.056 & 0.055 \\
S2 Linear                  & 0.086 & 0.090 & 0.548 & 0.203 & 0.202 \\
S3 Threshold               & 0.856 & 0.836 & 0.999 & 0.994 & 0.995 \\
S4 W-shape                 & 0.088 & 0.084 & 0.033 & 0.166 & 0.171 \\
S5 Oscillating             & 0.997 & 0.996 & 0.000 & 1.000 & 0.999 \\
S6 Heteroscedastic         & 0.083 & 0.106 & 0.107 & 0.202 & 0.170 \\
\bottomrule
\end{tabular}
\end{table}

\begin{table}[h]
\centering
\small
\caption{Empirical power at nominal level $\alpha = 0.05$ at
$n = 500$, $\lambda = 0.5$. Compare with Table~1 of the main
paper ($n = 200$).}
\label{tab:sim-n500}
\begin{tabular}{lccccc}
\toprule
Scenario
  & $\xi_{\text{Ch}}$
  & $\nu_{\text{AR}}$
  & $\xi_{\text{Ch,boost}}^{M=50}$
  & $\widehat\xi_{AD}^*$
  & $T_{\mathrm{cut}}$ \\
\midrule
S1 Null                    & 0.042 & 0.048 & 0.046 & 0.048 & 0.041 \\
S2 Linear                  & 0.176 & 0.184 & 0.936 & 0.805 & 0.805 \\
S3 Threshold               & 1.000 & 1.000 & 1.000 & 1.000 & 1.000 \\
S4 W-shape                 & 0.175 & 0.171 & 0.285 & 0.790 & 0.789 \\
S5 Oscillating             & 1.000 & 1.000 & 0.504 & 1.000 & 1.000 \\
S6 Heteroscedastic         & 0.134 & 0.219 & 0.610 & 0.953 & 0.966 \\
\bottomrule
\end{tabular}
\end{table}

The qualitative ranking observed at $n = 200$ (Table~1 of the
main paper) is preserved across sample sizes:
$T_{\mathrm{cut}}$ substantially outperforms
$\xi_{\mathrm{Ch}}$ and $\nu_{\mathrm{AR}}$ on the non-monotone
scenarios (S4 and S6), matches or exceeds them on the monotone
alternative (S2), and maintains correct empirical size under the
null (S1). The boosted Chatterjee variant
$\xi_{\text{Ch,boost}}^{M=50}$ outperforms all methods on the
linear alternative (S2) at every sample size but exhibits a marked
small-subsample failure pattern on the oscillating alternative
(S5): at
$n = 100$ its power is $0.000$, at $n = 200$ still $0.000$, and
at $n = 500$ it recovers only partially to $0.504$, while
$T_{\mathrm{cut}}$ retains full power $\approx 1.000$
at every sample size. This reflects the boosted statistic's
specialization for smooth monotone dependence and its
degradation on periodic signals until the sample size becomes
large enough for the boosting kernel to resolve the oscillations.

\subsection{Empirical size and power at additional signal-to-noise levels}

Table~\ref{tab:sim-lambda025} reports the analogue of the main
paper's Table~1 at the stronger signal-to-noise setting
$\lambda = 0.25$; Table~\ref{tab:sim-lambda075} reports the
analogue at the weaker setting $\lambda = 0.75$. Both use $n = 200$,
$M = 1000$ Monte Carlo replications, and $B = 999$ permutations.

\begin{table}[h]
\centering
\small
\caption{Empirical power at nominal level $\alpha = 0.05$ at
$n = 200$, $\lambda = 0.25$ (strong signal-to-noise). Compare
with Table~1 of the main paper ($\lambda = 0.5$). Under the
S6 design $Y = (1 + 2|X|)\lambda \varepsilon$, $\lambda$
enters purely as an overall scale factor; because the
coefficients considered are scale-invariant, the S6 row is
$\lambda$-invariant in expectation. The finite-sample values
shown here differ from the corresponding entry in Table~1 by
at most $0.001$, within Monte Carlo error at $M = 1000$
replications.}
\label{tab:sim-lambda025}
\begin{tabular}{lccccc}
\toprule
Scenario
  & $\xi_{\text{Ch}}$
  & $\nu_{\text{AR}}$
  & $\xi_{\text{Ch,boost}}^{M=50}$
  & $\widehat\xi_{AD}^*$
  & $T_{\mathrm{cut}}$ \\
\midrule
S1 Null                    & 0.041 & 0.051 & 0.043 & 0.049 & 0.051 \\
S2 Linear                  & 0.490 & 0.497 & 1.000 & 0.961 & 0.960 \\
S3 Threshold               & 1.000 & 1.000 & 1.000 & 1.000 & 1.000 \\
S4 W-shape                 & 0.475 & 0.469 & 0.053 & 0.955 & 0.956 \\
S5 Oscillating             & 1.000 & 1.000 & 0.000 & 1.000 & 1.000 \\
S6 Heteroscedastic         & 0.089 & 0.136 & 0.258 & 0.445 & 0.419 \\
\bottomrule
\end{tabular}
\end{table}

\begin{table}[h]
\centering
\small
\caption{Empirical power at nominal level $\alpha = 0.05$ at
$n = 200$, $\lambda = 0.75$ (weak signal-to-noise). Compare
with Table~1 of the main paper ($\lambda = 0.5$). The S6 row is
$\lambda$-invariant because $\lambda$ enters as a pure scale
factor and the coefficients are scale-invariant.}
\label{tab:sim-lambda075}
\begin{tabular}{lccccc}
\toprule
Scenario
  & $\xi_{\text{Ch}}$
  & $\nu_{\text{AR}}$
  & $\xi_{\text{Ch,boost}}^{M=50}$
  & $\widehat\xi_{AD}^*$
  & $T_{\mathrm{cut}}$ \\
\midrule
S1 Null                    & 0.041 & 0.051 & 0.043 & 0.049 & 0.051 \\
S2 Linear                  & 0.061 & 0.071 & 0.453 & 0.186 & 0.191 \\
S3 Threshold               & 0.655 & 0.666 & 0.998 & 0.993 & 0.994 \\
S4 W-shape                 & 0.064 & 0.068 & 0.044 & 0.163 & 0.164 \\
S5 Oscillating             & 0.968 & 0.958 & 0.000 & 1.000 & 1.000 \\
S6 Heteroscedastic         & 0.089 & 0.136 & 0.258 & 0.445 & 0.419 \\
\bottomrule
\end{tabular}
\end{table}

The qualitative ordering established at $\lambda = 0.5$ persists
at both stronger and weaker signal-to-noise. At $\lambda = 0.25$
the gap between $T_{\mathrm{cut}}$ and the non-parametric
competitors is particularly striking on both the linear and
non-monotone alternatives: on S2 Linear,
$T_{\mathrm{cut}} = 0.960$ against $\nu_{\text{AR}} = 0.497$
and $\xi_{\text{Ch}} = 0.490$ (roughly $2\times$ power); on S4
W-shape, $T_{\mathrm{cut}} = 0.956$ against
$\nu_{\text{AR}} = 0.469$ and $\xi_{\text{Ch}} = 0.475$. At
$\lambda = 0.75$ all methods lose absolute power but the ordering
is unchanged. The boosted variant's small-subsample failure pattern on S5
persists across $\lambda$: its power is $0.000$ for
$\lambda \in \{0.25, 0.5, 0.75\}$ at $n = 200$, independent of
signal-to-noise.

\subsection{Sensitivity to the regularization parameter
\texorpdfstring{$\varepsilon$}{eps}}

Table~\ref{tab:eps-sensitivity} compares empirical power of
$T_{\mathrm{cut}}$ at the unregularized value
$\varepsilon = 0$ against the safeguard value $\varepsilon = 0.05$
at $n = 200$, $\lambda = 0.5$. The operating characteristics
change only marginally between the two values, with the largest
difference on the heteroscedastic scenario (S6: $0.420$ versus
$0.358$). In the moderate signal-to-noise settings considered
here the two produce comparable behavior, and
$\varepsilon = 0$ is the default in the main paper
(Section 2.3).

\begin{table}[h]
\centering
\small
\caption{Sensitivity of $T_{\mathrm{cut}}$ to
regularization parameter $\varepsilon$, $n = 200$, $\lambda = 0.5$,
$M = 1000$ Monte Carlo replications, $B = 999$ permutations.}
\label{tab:eps-sensitivity}
\begin{tabular}{lcc}
\toprule
Scenario & $\varepsilon = 0$ & $\varepsilon = 0.05$ \\
\midrule
S1 Null                    & 0.051 & 0.050 \\
S2 Linear                  & 0.386 & 0.383 \\
S3 Threshold               & 1.000 & 1.000 \\
S4 W-shape                 & 0.321 & 0.320 \\
S5 Oscillating             & 1.000 & 1.000 \\
S6 Heteroscedastic         & 0.420 & 0.358 \\
\bottomrule
\end{tabular}
\end{table}

\bibliographystyle{plainnat}

\section{Aging plasma proteome application: computational details}\label{sec:supp-aging}

This appendix documents the computational details of the
Section~6 application.

\subsection{Data source and access}\label{sec:supp-aging-data}

The application uses the \texttt{agingplasmaproteome} example
dataset distributed with the \texttt{DEswan} R package
\citep{shen2020deswan}. The data can be obtained without a
package installation directly from the package's GitHub
repository:

\begin{center}
\small
\verb|https://github.com/lehallib/DEswan/raw/master/data/agingplasmaproteome.rda|
\end{center}

The full DEswan example dataset contains $171$ subjects across
four cohorts (Seattle: $n = 70$, PRIN06: $n = 51$, PRIN09:
$n = 24$, GEHA: $n = 26$), with $1{,}305$ plasma proteins
measured on each subject. Section~6 uses the Seattle cohort
subset ($n = 70$, ages $21$--$88$) for the reasons discussed in Section~6.1 of the main paper. In the
Seattle subset, age takes $47$ distinct values across the
$70$ subjects. Chatterjee's $\xi_{\text{Ch}}$ ties in $X$ are
broken uniformly at random (fresh permutation of tied ranks
per statistic call), following the \texttt{XICOR} R package
convention.

\subsection{Single-file reproduction script}\label{sec:supp-aging-code}

A self-contained R script that downloads the data and runs the
Seattle-only analysis reported in Section~6 is provided in
\texttt{aging\_analyses.R}. The script uses parallel processing
via \texttt{parallel::mclapply} and completes in approximately
$4$ minutes on a $30$-core node. It produces two output files:
\texttt{aging\_seattle\_results.csv} (per-protein $p$-values
from all four methods) and \texttt{aging\_analyses\_output.txt}
(a human-readable summary). All four methods---Pearson,
Spearman, Chatterjee's $\xi_{\text{Ch}}$, and $T_{\mathrm{cut}}$
---use exact permutation calibration with $B = 999$ replicates,
matching the unadjusted and Benjamini--Hochberg counts reported
in Table~2 of the main paper.

\end{document}